\documentclass{article}
\usepackage{amsmath,amsthm,amssymb,mathtools}
\usepackage{apxproof}
\usepackage{hyperref}
\usepackage{tikz}
\usepackage{pgfplots}
\usepgfplotslibrary{polar}
\usepackage[margin=1.5in]{geometry} %
\usetikzlibrary{shapes}
\usepackage{accents}
\usepackage{tikz}
\usepackage{pgfplots}
\usepgfplotslibrary{polar}
\usetikzlibrary{shapes}
\usepackage{nicefrac}
\usepackage[round]{natbib}

% Our commands

\newcommand{\R}{\mathbb{R}}
\newcommand{\E}{\mathbb{E}}
\renewcommand{\P}{\mathbb{P}}
\newcommand{\argmax}{\arg\,\max}
\newtheorem{theorem}{Theorem}
\newtheorem{claim}[theorem]{Claim}

\newtheorem{corollary}{Corollary}

\newtheorem{definition}{Definition}
\newtheorem{lemma}{Lemma}
\newtheorem{proposition}{Proposition}
\newtheorem{remark}{Remark}

\newtheorem{assumption}{Assumption}
\renewenvironment{proof}[1][Proof]{\textbf{#1.} }{\ \rule{0.5em}{0.5em}}
\theoremstyle{definition}

\begin{document}

% Title. Note the optional short title for running heads. In the interest of anonymization, please do not include any acknowledgements.
\title{Opaque Contracts}

% Anonymized submission.
\author{Andreas Haupt and Zo\"{e} Hitzig\footnote{We thank especially Eric Maskin, Ben Golub and Shengwu Li. We also thank Benji Niswonger, Chang Liu, David Laibson, Ed Glaeser, Kathryn Holston, Jeremy Stein, Matthew Rabin, Mohammad Akbarpour, Shani Cohen, Yannai Gonczarowski and Zachary Wojtowicz.}}
\maketitle

% Abstract. Note that this must come before \maketitle.
\begin{abstract}
        Firms have access to abundant data on market participants. They use these data to target contracts to agents with specific characteristics, and describe these contracts in opaque terms. In response to such practices, recent proposed regulations aim to increase transparency, especially in digital markets. In order to understand when opacity arises in contracting and the potential effects of proposed regulations, we study a moral hazard model in which a risk-neutral principal faces a continuum of weakly risk-averse agents. The agents differ in an observable characteristic that affects the payoff of the principal. In a \emph{described contract}, the principal sorts the agents into groups, and to each group communicates a distribution of output-contingent payments. Within each group, the realized distribution of payments must be consistent with the communicated contract. A described contract is \emph{transparent} if the principal communicates the realized contract to the agent ex-ante, and otherwise it is \emph{opaque}. We provide a geometric characterization of the principal's optimal described contract as well as conditions under which the optimal described mechanism is transparent and opaque.
\end{abstract}

\section{Introduction}
Firms have access to abundant data on consumers and employees. They use these data to tailor contracts and market rules to specific participants, optimizing pricing, payment and content decisions with predictive algorithms. 

Such tailored rules are often described to market participants in opaque terms. Ride-hailing apps deploy complex pricing schemes and shape drivers' expectations about payment through summary statistics. Platforms personalize content to particular users and offer limited descriptions about how users' personal information influences the content they see. Employers have intricate rubrics for determining employee pay and promotions, but may describe these rubrics to employees and prospective employees incompletely. In response to such practices, recent laws and proposed regulations in the United States and the European Union aim to increase transparency in consumer and labor markets.\footnote{For instance the EU's \href{https://ec.europa.eu/commission/presscorner/detail/en/ip_22_2545}{Digital Services Act}, proposed in October 2022, states that platforms must offer ``clear, meaningful and uniform information about the parameters used" in a personalized market. Meanwhile, in the US, the \href{https://www.congress.gov/117/bills/s2024/BILLS-117s2024is.pdf}{Filter Bubble Transparency Act}, proposed in June 2021, suggested that any ``platform that uses an opaque algorithm" must ``provide notice to users of the platform that the platform uses an opaque algorithm that makes inferences based on user specific data." \href{https://www.washingtonpost.com/business/2023/01/25/pay-transparency-laws-companies-salary-disclosure/}{New laws in California, New York}, and a handful of other states require employers to post salary ranges for all advertised positions.}

When do firms offer opaque contracts, and how does opacity affect consumers and employees? In this paper, we study how firms (and other principals) design and describe contracts for agents. In a \emph{described contract}, a principal communicates different contracts to different groups of agents. Within each group, the ex-post distribution of outcomes must be consistent with the (possibly stochastic) contract communicated ex-ante. 

     The optimal described contract resolves a novel tradeoff. Relative to transparent contracts, opaque contracts benefit the principal by creating a wedge between the agents' expected payments and true payments. At the same time, opacity hurts the principal when agents are risk averse, by introducing new uncertainty into the contract. The principal must compensate the agents for taking on additional risk. 
     
     We begin with a stylized example that introduces the key elements of our model.

 \subsection{Introductory Example}\label{sec:intro_example_body}

A firm plans to introduce year-end bonuses for employees to induce higher effort. Since they cannot contract on effort, the firm will offer bonuses to employees who meet a performance target. The firm commits to the bonus schemes they communicate ex-ante---honoring commitments is important for employee trust and retention, and helps the firm steer clear of legal troubles. 

The firm has a wealth of data on employee performance that allows them to estimate the effective cost of paying each employee a bonus of \$x. These data reveal trends about employees in two divisions of equal size: employees in Division 0 are more expensive to incentivize---inclusive of taxes and other administrative and compliance costs, it costs four times more per dollar to incentivize employees in Division 0 than it does to incentivize employees in Division 1. 

Suppose the employees are risk-averse in payments $(x)$ and have quadratic effort ($a$) costs. In particular, assume that agents have utility $u(a,x) = a x^{\nicefrac{1}{2}} - \nicefrac12 a^2$, where $a$ is effort and the probability that the agent meets their performance target. Suppose further that the firm is risk-neutral and the value to the firm of each employee meeting their performance target is the same. Then the firm's optimal division-wide bonuses scheme are given by: 

\begin{center}

\fbox{\parbox{3.85in}{\textsc{Division 0}\medskip

{\small Employees who meet their performance targets receive bonuses of $\nicefrac13$.

\vspace{2mm}

\hfill ($\hat x_{0}$)}}}

\vspace{2mm}

\fbox{\parbox{3.85in}{ \textsc{Division 1}\medskip

{\small Employees who meet their performance targets receive bonuses of $\nicefrac43$.

\vspace{2mm}

\hfill ($\hat x_{1}$)}}}
\end{center}
that is, $x_0 = \frac13$ and $x_1=\nicefrac43.$ Given these bonues, the employees' optimal actions are $a_0 = \nicefrac{1}{\sqrt{3}}$ and $a_1 = \nicefrac{2}{\sqrt{3}}$.\footnote{Where $a_0$ and $a_1$ are determined by the first-order condition with respect to $a$ of the utility function of employees in Division 0 and Division 1, respectively.} Assuming that the value of each employee meeting their target is normalized to 1, the firm's payoff is $v = \nicefrac12 a_0(1 -  x_0) + \nicefrac12 a_1(1 -  \nicefrac{1}{4}x_1) = \nicefrac{1}{\sqrt{3}} \approx .55.$ 

Now suppose the firm instead decides to make a firm-wide commitment, offering the same bonus scheme to all employees. Taking advantage of the fact that the firm is both \textit{designing the bonus scheme} and \textit{choosing what to say about it}, they could take a more opaque approach. They could say: 

\begin{center}

\fbox{\parbox{3.85in}{\textsc{All Employees}\medskip

{\small Employees who meet their performance targets receive an average bonus of $\nicefrac{9}{8}$. Half of employees will receive of a bonus of $\nicefrac14$ and the other half will receive a bonus of 2.

\hfill ($\hat x_{01}$)}}}
\end{center}
and in fact carry out the scheme
$$ x_{01} = \begin{cases} \nicefrac14 & \text{if in Division 0} \\ 2 & \text{if in Division 1}.\end{cases} $$
Under this scheme, all employees take the same action  assuming that the contract they face is equally likely to be $\nicefrac14$ or 2, i.e $a = \nicefrac12 (\nicefrac14^{\nicefrac12}) + \nicefrac12 ( 2^{\nicefrac12})$.\footnote{Where $a$ is determined by taking the first-order condition of all employees' utility function with respect to $a$.} The principal's payoff is  $v = \nicefrac12 a(1 -  x_{01}(0)) + \nicefrac12 a\left(1 -  \nicefrac{1}{4}x_{01}(1)\right) = \nicefrac{5}{8}\left(\nicefrac14 + \nicefrac{1}{\sqrt 2}\right) \approx .6.$  Thus, the firm can do better by introducing a firm-wide, instead of division-by-division bonus scheme. This is because in the opaque scheme, the firm induces the same action from all employees, but then pays the employees differently. Compared to the transparent scheme, employees in Division 0 take a higher action in the opaque scheme but receive lower pay, while employees in Division 1 take a lower action but receive higher pay. On net, this benefits the firm because they can get higher actions overall but pay more to the ``cheap" employees in Division 1 than to the ``expensive" ones in Division 0. 
 
 Note that the firm's gain from the opaque scheme requires that the employees cannot deduce that their payments in fact depend on their division. Although some employees may try to infer how the firm decides who gets the lower bonus versus the higher bonus, there is an overwhelming number of factors to consider---it could depend on division, but it could also depend on region, on whether the employee's position is client-facing, on whether the employee works remotely, and so forth. In the face of such complexity, a modest modeling approach is to assume that employees can do is take the firm's communication at face value.

So far we have only considered two possibilities: in the first ``transparent" scheme, the firm communicates a different contract to each division, and in the second ``opaque" scheme, the firm communicates a single contract to all employees. Could the firm get an even higher payoff from the bonus scheme by additionally restructuring the divisions? That is, suppose the firm moved some of the employees in Division 1 to Division 0, creating Divisions $0'$  (mix of ``expensive" and ``cheap" employees) and Division $1'$ (all ``cheap" employees). Would the firm get an even higher payoff from offering an opaque scheme within the newly created divisions? And why stop there---could the firm do better by creating any arbitrary grouping of cheap and expensive agents, and communicating different opaque contracts to these groups? In this example, the answer is no: The firm cannot do strictly better than offering a single opaque contract to the whole firm.\footnote{The firm could also split the employees into divisions or any other grouping that preserved the composition of the overall firm, and attain the optimal payoff. That is, the firm could create two divisions Division 0$'$ and 1$'$ in which $\frac12$ of the employees in each division came from Division 0 and the other half from Division 1. The calculation for the exact optimal described contract can be found in \autoref{sec:intro_example}---it is in fact not too different from the contract $\hat x_{01}$.} The model in this paper develops tools for understanding the firm's optimal strategy in the scenario presented here.

\subsection{Overview}
The preceding example illustrates the key components of our moral hazard model with descriptions. A principal faces a continuum of weakly risk-averse agents, who each have a payoff-relevant observable characteristic. The principal wants to incentivize effort, but effort is costly for the agents and non-contractible. So, the principal offers contracts contingent on an output that is correlated with effort.

The principal chooses a \textit{described contract}, which has three elements. First, the principal chooses a \textit{sorting function} that sorts agents into different contracts based on their observable characteristics. In the introductory example, the sorting function determines whether all employees are grouped together or sorted into divisions, as well as the composition of ``cheap" and ``expensive" agents in each division. Second, the principal chooses a contract to \textit{communicate} to each group of agents. In the example, these are the statements $(\hat x_0, \hat x_1)$ in the transparent contract and $\hat x_{01}$ in the opaque contract. Finally, the principal chooses a contract to \textit{realize} for each specific agent, with the \textit{consistency} requirement that within each group, the distribution of realized payments is consistent with the communicated contract. In the transparent scheme in the example, the realized contract  $(x_0, x_1)$ is trivially consistent with the communicated contract $(\hat x_0, \hat x_1)$ because the two coincide exactly. In the opaque scheme in the example, the realized contract $x_{01}$ is consistent with the communicated contract $\hat x_{01}$ because it results in the communicated distribution of payments.

The principal knows that agents take the communicated contract at face value, and solves for the optimal described mechanism. We focus on when the optimal described contract is \textit{transparent}---that is, when the realized contract is communicated to the agent ex-ante. In such cases, it would be redundant for regulators to impose transparency requirements, as firms are already incentivized to be transparent. When a described contract is not transparent, we say it is \textit{opaque}. A particular type of opaque contract is a \textit{fully coarse} contract: a described contract is fully coarse if there is only one communicated contract. When the optimal described contract is fully coarse, the principal may have an easier time implementing it---for example, if the firm in the introductory example did not have the power to additionally restructure the divisions, it would have only the optimal transparent and the optimal fully coarse contract to choose from.\footnote{In \autoref{sec:coarse}, we study a modified version of the problem in which the principal is constrained to offering \textit{coarse} contracts, i.e. contracts that can group agents together based on their characteristics, but all agents with a particular characteristic must belong to the same group.}

In \autoref{sec:state_ind}, we provide a geometric characterization of the principal's problem. Our approach relies on concavification techniques familiar from \citet{aumann1995repeated} and the literature on persuasion and information design \citep[ff]{kamenica2011bayesian}. This technique allows us to gain insight into a joint contract design and information design problem, where the two parts of the design problem interact non-trivially: The optimal output-contingent payment scheme (contract design) depends on the composition of the group of agents to whom it is communicated (information design), and vice versa. This geometric characterization allows us to investigate the \textit{value of opacity }for the principal, defined as the different between the principal's optimal opaque mechanism and the principal's optimal transparent mechanism. In addition to investigating the value of opacity for the principal, we study how opacity affects agent welfare.

After using the geometric characterizations to derive sufficient conditions that are useful for applications, we present the result that captures the key tradeoff of the paper. As agents' risk-aversion increases, the value of opacity converges to zero. Opacity helps the principal by introducing a wedge between agents' expected payments and realized payments, which may be profitable. But opacity also introduces uncertainty---when agents are risk-averse, this uncertainty must be compensated with higher payments.\footnote{For a particularly simple illustration of this tradeoff, see \autoref{sec:intro_example}, which compares the example in \autoref{sec:intro_example_body} to a nearly-identical example in which employees are risk-neutral.} 

In the canonical single-agent moral hazard problem \citep{holmstrom1979moral, grossmanhart}, the principal faces a tradeoff between incentives and insurance: increasing the spread between payments for different outputs  motivates the agent to exert effort, but also discourages the agent by introducing additional risk into the contract. Our setting differs from a canonical contracting environment only in that there is a continuum of agents who each have an observable characteristic, and thus with these two ingredients the principal has an opportunity to offer an opaque contract. Opacity introduces a new element into this tradeoff. When comparing the potential benefits from moving from a transparent contract to an opaque contract, the principal trades off three quantities: the \textit{incentive increase} from \textit{some} agents who take a higher action, the \textit{incentive decrease} from \textit{some} agents who take a lower action, and the \textit{risk decrease} from \textit{all} agents' risk aversion. Our result shows that in the limit, the decrease from risk outweighs the net benefits from incentives.

Next, in \autoref{sec:application}, we apply our results to understand the design and communication of per-ride payment schemes for drivers on ride-hailing platforms. Here, the principal is a ride-hailing platform (e.g. Uber, Grab, DiDi) and the agents are drivers. The drivers' observable characteristic is whether they are available to drive during high-demand or low-demand periods. The drivers exert costly hidden effort: they must drive some distance from their home on the outskirts of the city into town---driving further into town increases the chances that the driver will pick up a ride. The firm chooses per-ride payment schemes and decides whether to communicate transparently (articulating how demand influences per-ride payments) or opaquely. This application allows us to show in detail how different parameters of the model influence the platform's optimal strategy. Further, it illustrates that while opacity on average increases driver welfare relative to transparency, it is not Pareto improving in general---a surprising finding with subtle policy implications.

Before concluding the paper, we discuss the related literature in \autoref{sec:lit}. This discussion includes an interpretation of our model and results through the lens of the ``Bayesian persuasion" paradigm. A brief conclusion in \autoref{sec:conclusion} suggests directions for future work.  
    
\section{Model}\label{sec:model}

There is a continuum of agents, each of whom takes a costly action $a \in A$ that affects the payoff of a principal. The principal does not observe the action, but observes output $q \in  Q$ that is informative about the action $a$, and distributed according to $\pi:  A \to \Delta (Q)$. We denote the probability of a particular output $q$ by $\pi(q \mid a)$, with $\sum_{q \in Q} \pi(q\mid a)=1$.

Each agent contracts with the principal in a particular state of the world $s \in S$, which is observed by the principal and may or may not be observed by the agents depending on the application. The state $s$ is distributed in the population according to distribution $f$, and the state space $S$ is finite. The agents maximize expected utility, and have symmetric von Neumann-Morgenstern utility functions $u \colon   A \times S \times X \to \R$, where $X$ is a set of feasible outcomes. The principal is risk neutral with an objective function $v \colon  A \times  S \times  X \to \R$. 

 The principal chooses a \textit{described contract}, which has three features. The first feature of a described contract is a collection of \textit{communicated contracts} $\hat g_k\colon Q \to \Delta(X)$ where $K$ is a set of contract labels and each $\hat g_k$ is a contract. We denote the implied probability distribution over outcomes given communicated contract $\hat g_k$ by $\hat P_{kq} \coloneqq \P(\hat g_k(q))$.  The principal also chooses an assignment rule $\sigma \colon S \to \Delta(K)$ which assigns agents to communicated contracts based on the state in which they arrive. We denote the  distribution of communicated contracts given state $s$ by $\mu_{s}\coloneqq \mathbb P(\,\cdot \mid s)$. Finally, the principal chooses a \textit{realized contract} $g_k: Q \times S \to \Delta(X)$ for each $k \in K$ which specifies the outcome for an agent who receives contract $k$ in state $s$. We denote the \textit{induced distribution of outcomes} given realized contract $g_k(q,s)$ by $P_{kqs}\coloneqq \P(g_k(q,s)).$ Note that the domain of a realized contract $g_k$ is $Q \times S$ while the domain of a communicated contract $\hat g_k$ is $Q.$

To each agent, the principal communicates only a single contract $\hat g_k$. Ex-post, the agent observes outputs and outcomes for all agents who received the same contract. That is, an agent who receives a contract labelled $k$ sees statistics on outputs and outcomes for all agents who received the contract labeled $k$---they see a ``database" containing $\hat P_{kq}$.\footnote{An alternative and equivalent interpretation is to view this as a legal constraint that is common knowledge: The principal will be punished by some third-party if she lies, and the agent knows this, and the Principal knows that the agent knows this... and so forth.}

Since the agents who receive contract $\hat g_k$ have access to the database $\hat P_{kq}$, the principal must choose realized contracts that are \textit{consistent} with the communicated contracts. In order to define consistency, we first introduce notation for the distribution of outcomes for a contract labelled $k$ for output $q$. We call this quantity the \textit{observed distribution of outcomes.}

\begin{definition}
    Given realized contract $g_k(q,s),$ the \emph{observed distribution of outcomes} is a probability distribution 
    $$P_{kq} \coloneqq \sum_{s\in S}P_{kqs} \frac{\mu_s(k) f(s)}{\sum_{s' \in S}  \mu_{s'}(k)f(s')}.$$
\end{definition}
A realized contract is \textit{consistent} with a communicated contract if the observed distribution of outcomes from the realized contract ($P_{kq}$) is the same as the distribution of outcomes implied by the communicated contract.
\begin{definition}[Consistency]
    A realized contract $g_k$ is consistent with a communicated contract $\hat g_k$ if $\hat P_{kq} = P_{kq}$
   for all $q \in Q$.
\end{definition}
In sum, the principal chooses a \textit{described contract} which contains a collection of consistent communicated contracts and realized contracts, as well as a sorting function. 

\begin{definition}[Described contract] 
A \emph{described contract} $((\hat g_k)_{k \in K}, ( g_k)_{k \in K}, \sigma)$ consists of communicated contracts $\hat g_k(q)$, realized contracts $g_k(q,s)$ and an assignment rule $\sigma \colon S \to \Delta(K),$ where for each $k\in K$, $g_k(q)$ is consistent with $\hat g_k(q,s).$
 \end{definition}
The following timeline summarizes and clarifies the timing of the game: 
 \begin{enumerate}
     \item The principal chooses a described contract $(( g_k)_{k \in K}, (\hat g_k)_{k \in K}, \sigma).$
     \item To each agent $i$ in state $s$, 
     \begin{enumerate}
         \item The principal communicates contract $\hat g_k$ with probability $\mu_s(k)$.
          \item The agent accepts or rejects the contract.

          \begin{itemize}
          \item If the agent rejects, she gets reservation utility 0. 
              \item If the agent accepts, she chooses action $a_k^*$, assuming the mechanism is $\hat g_k$. 
          \end{itemize}
     \item The outcome $x$ is realized according to $g_k(q,s)$.
     \end{enumerate}
    \item The principal and agents' utilities are realized. Agents who received contract $k$ observe $P_{kq}.$
 \end{enumerate}
The principal's program can be written as:\footnote{In summation form, the principal's expected utility is:$$ \sum_{s \in S}
 \sum_{k \in K}v(a_k^*,g_k(q(a_k^*),s),s) \mu_s(k) f(s).$$} 
\begin{equation}
    \max_{(\hat g_k)_{k \in K},(g_k)_{k \in K}, \sigma} \E_{s\sim f,\, k \sim \mu_s}[v(a_k^*, g_k(q(a_k^*),s),s)]
\end{equation}
\begin{center}
    subject to
\end{center}
\begin{align*}
\hat P_{kq} &= P_{kq}
 \text{ for all }k \in K, q \in Q \tag{Consistency}\\
    a_k^* &\in \argmax_a \E[ u(a, \hat g_k(q(a)), s)]\,\text{ for all } k\in  K \tag{IC} \\
    0 &\leq \E[ u(a_k^*, \hat g_k(q(a_k^*)), s)]\, \text{ for all } k\in  K. \tag{IR}
\end{align*}

We say that a described contract that solves this program is an \textit{optimal described contract}. There are cases when the principal's optimal described contract is not unique. In such cases, we assume that the principal chooses the optimal described contract that maximizes agent welfare. This assumption simplifies the exposition and is necessary for characterizing agent welfare in \autoref{sec:agent_welfare}. Our analysis will focus on properties of the optimal described contract. To summarize, the optimal described contract, together with the agents' actions $(a_k^*)$ constitute a (welfare-maximizing) subgame perfect equilibrium. 

Note that in the final stage of the game (step 3), both the principal's and agents' realized utility depend on the set of \textit{realized} outcomes, determined by $g_k(q,s)$. This feature of the model implies that when the agents make their decision in step 2b of the game, they may not know as much about their outcome as the principal knows. We say that a described contract is \textit{transparent} if the realized contract is communicated to the agent ex-ante. This occurs when, in the described contract, there is a one-to-one correspondence between the agent's state and the label of the contract to which the agent is assigned. Or, put another way, the described contract is transparent if for each agent the distribution of outcomes implied by the communicated contract is the same as the outcomes in the realized contract.

\begin{definition}[Transparent described contracts]
A described contract $((\hat g_k)_{k \in K}, ( g_k)_{k \in K}, \sigma)$ is \emph{transparent} if $\sigma \colon S \to \Delta(K)$ is a bijective function.
\end{definition}
We contrast transparent described mechanisms with \textit{opaque} described mechanisms. 

\begin{definition}[Opaque described contract]
A described contract $((\hat g_k)_{k \in K}, ( g_k)_{k \in K}, \sigma)$ is \emph{opaque} if it is not transparent.
\end{definition}

In an opaque described mechanism, the principal takes advantage of the fact that they face a population of agents, and can thus describe the mechanism with varying amounts of detail without ``lying." The principal is constrained by consistency, but otherwise has the freedom to, for example, describe the realized contract for a particular agent stochastically even if it is in fact deterministic. A particular opaque described mechanism will be of interest in both our theoretical analysis and applications---when does the principal communicate the same mechanism to all agents? We say that a described mechanism is \textit{fully coarse} when the principal communicates the same contract to all agents. 

\begin{definition}[Fully coarse described contracts]
A described contract $((\hat g_k)_{k \in K}, ( g_k)_{k \in K}, \sigma)$ is \emph{fully coarse} if $|K|=1$.
\end{definition}

In \autoref{sec:coarse}, we will study the principal's optimal described contract when the principal must respect an additional \textit{no-arbitrary-differential-treatment} constraint. This constraint says that for any two agents $i, j$ who arrive in the same state $s$, the principal must communicate the same mechanism $\hat g_k,$ i.e. $\sigma(s)$ is injective.\footnote{The \textit{no-aribitrary-differential-treatment} constraint may arise in applications for feasibility or fairness reasons. For example, in the introductory example in \autoref{sec:intro_example_body}, if the Division 0 and Division 1 were in fact different regional offices in different jurisdictions, it may be infeasible to communicate contracts to groups composed of employees from both regions. Or, if the difference underlying the differences in employee cost were in fact due to different job descriptions in the two divisions, it may be perceived as unfair for employees with the same job descriptions to receive different contracts.}

\section{Properties of the Optimal Described Contract}\label{sec:state_ind}

The principal's problem can be understood as a joint contract design and information design problem, where the two parts of the problem interact non-trivially. Nonetheless, we can characterize the principal and agent's utility at the optimal described contract through a concavification argument similar to that introduced in \citet{aumann1995repeated} and adapted to the study of ``persuasion" in \citet{kamenica2011bayesian}.

In this section, we restrict our attention to cases in which the agent's utility does not depend on the state. 

\begin{assumption}
The agent's utility is state-independent, i.e. $u(a,s,x) = u(a,x)$ for all $s \in S, a \in A, x \in A$.
\end{assumption}

\subsection{Geometric Characterizations}\label{sec:value_function}
 We define the principal's \textit{value function} $V: \Delta(S) \to \R$ to be the value of the objective function when the principal can communicate only one contract to all agents (i.e. the principal is constrained to \textit{fully coarse} described contract). Note that when the principal is choosing among fully coarse contracts, the range of $\sigma$ has a single element, so there is only a single realized contract $g_k$ and a single communicated contract $\hat g_k$, which we will call $g$ and $\hat g$, respectively. That is, 
 when considering only fully coarse contract, the only realized contract is $g(q,s) \coloneqq g_k(q,s),$ and the only communicated contract is $\hat g(q) \coloneqq \hat g_k(q)$.

\begin{definition}[Principal value function]\label{def:p_value_function}
   The \emph{principal's value function} is given by
\begin{equation}\label{eq:value_function}
    V(f) \coloneqq \max \left\{ \E_{s\sim f}[v(a^*, g(q(a^*),s), s) ] \, \middle\vert \begin{split}
    g &\colon Q \times S \to \Delta(X) \\ \, a^* &\in \argmax_a \E[u(a, \hat g(q(a)))] \\ 0 &\leq  \E[u(a^*, \hat g(q(a^*)))] \end{split} \right\}. 
\end{equation}
\end{definition}
We next introduce the concave closure of the principal's value function, which is pictured in the graph on the left in \autoref{fig:illustration_of_closures}. The figure presents the case where the state is binary, and we can thus identify a distribution $f$ with the probability of one of the states. So, a particular probability distribution $f \in \Delta(S)$ is represented by a point on the x-axis.

\begin{definition}[Closure of principal value function]
 The \emph{concave closure of the principal's value function} is \begin{equation}
    \overline{V}(f) \coloneqq \sup\{z \mid (f, z) \in \operatorname{co}(V)\}\label{eq:closure}
\end{equation}
where $\operatorname{co}(V)$ is the convex hull of the graph of the value function $V$.   
\end{definition}
In our first characterization result, we show that the concave closure $\overline{V}(f)$ of the principal's value function at population distribution $f$ is the principal's utility at the optimal described contract. The logic behind this result is similar to the logic in models of ``persuasion" \citep{kamenica2011bayesian}, but important differences lurk beneath the surface in both analysis and interpretation---we note two differences here and return to the discussion in \autoref{sec:lit}.

The first key difference is that in persuasion models, the concave closure of the value function represents the principal's (or ``sender's") utility from the optimal choice of a single object (``the statistical experiment") whereas here the the concave closure of the value function represents the principal's utility from the optimal choice of two objects (the composition of groups and the output-contingent payment scheme). Here the value function represents the principal's utility from what can be thought of as an information policy (the sorting of agents into groups) \textit{and} a contract choice. It is not obvious how the contract choice interacts with the optimal informational scheme. 

Note, second, what this implies about the principal's commitment: in persuasion models, the principal (sender) must commit to a statistical experiment---an assumption which may fit a limited range of settings, since the adherence to a statistical experiment is often difficult or impossible to monitor. Meanwhile, in our model of contracting with descriptions, the consistency requirement amounts to a standard form of commitment: The principal commits only to a (possibly stochastic) communicated contract, which is as easy to monitor as any commitment to a (possibly stochastic) contract in canonical contract theory.\footnote{In this sense our results connect to \citet{lin2022credible}. We return to the topic of persuasion, opacity and commitment in \autoref{sec:lit}, and there discuss an interpretation of our model and results through the lens of persuasion.}

\begin{theorem}\label{thm:value_func_characterization}
Given a distribution $f\in \Delta(S)$, the principal's utility at the optimal described contract lies on the concave closure of the principal's value function, at $\overline V(f)$. 
\end{theorem}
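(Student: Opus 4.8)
The plan is to establish the two inequalities ``value at optimal described contract $\le \overline V(f)$'' and ``$\ge \overline V(f)$'' separately, reducing the described-contract problem to a family of fully coarse subproblems linked by a Bayes-plausibility (``splitting'') constraint, and then to recognize the resulting expression as the concave closure, exactly as in the concavification arguments of \citet{aumann1995repeated} and \citet{kamenica2011bayesian}.

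\textbf{Decomposition.} First I would decompose an arbitrary described contract. For each label $k \in K$ write $\lambda_k := \sum_s \mu_s(k) f(s)$ for the mass of agents receiving contract $k$ and $f_k(s) := \mu_s(k) f(s)/\lambda_k$ for the conditional distribution of states among them; then $\lambda_k \ge 0$, $\sum_k \lambda_k = 1$, and $\sum_k \lambda_k f_k = f$, so $f$ is a mixture of the $f_k$. The crucial claim is that each label is an independent fully coarse problem on its sub-population $f_k$: agents with label $k$ observe only $\hat g_k$, so by the state-independence of utility (Assumption~1) they all choose the same action $a_k^*$, the IC and IR constraints for label $k$ coincide with those of the fully coarse program at $f_k$, and since the observed distribution is $P_{kq} = \sum_s P_{kqs} f_k(s)$ the Consistency constraint is exactly the fully coarse consistency at $f_k$. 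Hence the conditional payoff $\E_{s\sim f_k}[v(a_k^*, g_k(q(a_k^*),s),s)]$ is feasible for the program defining $V(f_k)$ and is therefore at most $V(f_k)$.

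\textbf{Both bounds.} The upper bound then follows by summing: the total payoff equals $\sum_k \lambda_k \E_{s\sim f_k}[v(a_k^*, g_k(q(a_k^*),s),s)] \le \sum_k \lambda_k V(f_k)$, and since $\sum_k \lambda_k f_k = f$ the point $(f, \sum_k \lambda_k V(f_k))$ is a convex combination of graph points $(f_k, V(f_k))$, hence lies in $\operatorname{co}(V)$, so $\sum_k \lambda_k V(f_k) \le \overline V(f)$ by definition of the concave closure. For the reverse direction I would run the construction backwards. By Carathéodory's theorem in $\Delta(S)\times\R$, the value $\overline V(f)$ is attained as $\sum_k \lambda_k V(f_k)$ for at most $|S|+1$ graph points with $\sum_k \lambda_k f_k = f$; letting $K$ index these points, setting $\mu_s(k) := \lambda_k f_k(s)/f(s)$ whenever $f(s)>0$ (a valid element of $\Delta(K)$, since its entries are nonnegative and sum to one), and installing the optimal fully coarse contract for each $f_k$ yields a described contract achieving exactly $\overline V(f)$.

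\textbf{Main obstacle.} The substantive step is the decomposition: it is what makes a joint contract-and-information-design problem factor cleanly across labels, and it leans essentially on Assumption~1 so that all agents sharing a label take one common action---without state-independence the per-label objective would not reduce to $V(f_k)$ and the concavification would break. A secondary technical point is the attainment of the supremum in \eqref{eq:closure}, which I would secure by verifying that $V$ is bounded and upper semicontinuous on the compact simplex $\Delta(S)$ so that $\operatorname{co}(V)$ is closed; this same regularity underwrites the finite Carathéodory representation used in the achievability argument.
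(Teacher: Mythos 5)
Your proposal is correct and follows essentially the same route as the paper: the same per-label decomposition into mass $\lambda_k$ and conditional distribution $f_k$ (the paper's $\rho_k$), the same bound of each label's payoff by the fully coarse value $V(f_k)$, and the same reverse construction via $\mu_s(k) = \lambda_k f_k(s)/f(s)$ with optimal fully coarse contracts installed on each sub-population. The only differences are minor: the paper runs Carath\'{e}odory on boundary points of a maximal convex region where $V \neq \overline V$ rather than on the whole graph, and it is silent on the regularity of $V$ needed for the supremum in the concave closure to be attained, a gap you rightly flag and patch.
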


\begin{proof}
Consider an arbitrary point $V^* = \overline{V}(f)$ on the concave closure of the value function. We prove two statements about this point: (i) There exists a described contract $(g_k, \hat g_k, \sigma)$ that achieves $V^*,$ and; (ii) there is no described contract that achieves higher utility for the principal than $(g_k, \hat g_k, \sigma)$ at $f$.

We begin with (i). If $\overline V(f) = V(f)$ then the optimal fully coarse described contract achieves $V^*$, by definition of $V(f)$. If $\overline V(f) \neq V(f)$, then there is a maximal convex subset $I \subseteq \Delta(S)$ such that $f \in I$ and $V(f)\neq \overline V(f).$  We denote the boundary of the set $I$ by $\partial I$. By Carath\'{e}odory's theorem, there is a set of at most $\lvert S \rvert $ boundary points, $\{ \partial_1, \partial_2, \dots, \partial_{\lvert S \rvert} \} \subseteq \partial I$ such that
\begin{align*}
\overline V(f) &= \sum_{k = 1}^{|S|}\lambda_k V(\partial_k) &f &= \sum_{k = 1}^{|S|}\lambda_k \partial_k.
\end{align*}
We may interpret every boundary point $\partial_k \in \Delta(S)$ as a distribution of states $s$ given that the contract is $k$.

We define a sorting function $\sigma \colon S \to \Delta(K)$ such that the distribution of contracts given state $s$ ($\mu_s$) satisfies\footnote{To see that this is a probability distribution, observe: $\sum_{k \in K} \mu_s(k) = \sum_{k \in K}\frac{\lambda_k\partial_k(s)}{f(s)} =\frac{\sum_{k \in K}\lambda_k\partial_k(s)}{f(s)} =  \frac{f(s)}{f(s)} =1.$})
\[
\mu_s(k) = \frac{\lambda_k\partial_k(s)}{f(s)}.
\]

Consider for each distribution $\partial_k$ the fully coarse communicated contract $\hat g_k$ and denote the (unique) realized contract for the fully coarse contract consistent with $\hat g_k$ by $g_k$. We claim that $(\sigma, (\hat g_k)_{k\in K}, (g_k)_{k\in K})$ defines a described contract that achieves principal utility of $\overline{V}(f)$. That this is a described contract, i.e., that $\hat g_k$ is consistent with $g_k$, is a direct consequence of the fact that $\hat g_k$ and $g_k$ are consistent as the communicated resp. described contract of a fully coarse contract. 

Next, we show that this described contract gives the principal utility of $\overline V(f)$. Note that contracts ($\hat g_k$, $g_k$) yield utility $V(\partial_k)$ (by definition of $V$). By the principal's risk neutrality, we can express the principal's utility contract-by-contract,
\begin{align*}
 \sum_{k \in K} \sum_{s \in S}f(s) \mu_s(k) V(\partial_k) &=\sum_{k \in K} \sum_{s \in S} f(s)\frac{\lambda_k\partial_k(s)}{f(s)} V(\partial_k) \\
&= \sum_{k \in K}\lambda_k V(\partial_k) \sum_{s \in S}\partial_k(s) = \sum_{k \in K}\lambda_k V(\partial_k) = \overline{V}(f).
\end{align*}

Next we prove statement (ii). Consider an arbitrary described contract $((\hat g_k)_{k\in K} , (g_k)_{k\in K}, \sigma)$ that yields principal utility $V \in \R$. Introduce the probability distribution on $S$ that describes the composition of agents to whom contract $\hat g_k$ is communicated: 
\[
\rho_k(s) \coloneqq \frac{ f(s) \mu_s(k)}{\sum_{s \in s} f(s) \mu_s(k)}.
\]
Observe that the utility that the principal derives from the mass of agents $\sum_{s \in S} f(s) \mu_s(k)$ that receive the communicated contract $\hat g_k$ may not be larger than the principal utility from the optimal fully coarse contract, $V(\rho_k)$ for this group. This is by definition---each $\hat g_k$ is a ``coarse" contract in the sense that a single contract is communicated to the mass of agents who receive $k$. So the \textit{optimal} coarse contract for the mass of agents $\sum_{s \in S} f(s) \mu_s(k)$ necessarily delivers weakly higher utility to the principal than the arbitrary contract $\hat g_k$. Hence, $V \le \sum_{k \in K} \sum_{s \in S} f(s) \mu_s(k) V(\rho_k)$. Note that $(\rho_k, V(\rho_k))$ is in the graph of $V$, and 
$$\sum_{k \in K} \sum_{s \in S} f(s) \mu_s(k) =\sum_{s \in S}  f(s) \left(\sum_{k \in K} \mu_s(k) \right) = \sum_{s \in S}  f(s) = 1,$$ 
so $f(s)\mu_s(k)$ are weights in a convex combination. Hence, $V \le \overline{V}(f)$ by the definition of the concave closure.
\end{proof}
Theorem \ref{thm:value_func_characterization} is powerful because it will typically be easier to solve for the principal's optimal fully coarse contract than it will be to solve for the principal's optimal described contract. This is because solving for the optimal described contract involves a nested optimization, optimizing over optimal contracts for all possible partitions of agents into groups. Meanwhile, the optimal coarse contract is the optimal contract for a single partition of agents into groups. 

We next define another object that will give further insight into the optimal described contract without requiring the an optimization over optimal contracts for all partitions of agents. The \textit{extremal closure of the principal's value function} is the supremum of the convex hull of the graph of $V$ evaluated only at the ``extreme" distributions in which the entire population belongs to a single state. 
\begin{definition}[Extremal closure of principal value function]
 The \emph{extremal closure of the principal's value function} is \begin{equation}
    V^T(f) \coloneqq \sup\{z \mid (f, z) \in \operatorname{co}(V(f') \colon f' \in I)\},
\end{equation}
   where $I=\{\delta_s \colon s \in S\} \subset \Delta(S)$ is the set of point mass distributions at all $s \in S.$ 
\end{definition}
On the left in \autoref{fig:illustration_of_closures}, we see that when the state is binary, the closure of the extremal principal value function is a line connecting $V(0)$ and $V(1).$ The extremal closure of the principal's value function gives the principal's utility at the optimal transparent contract.
\begin{proposition}\label{prop:transparent}
    Given a distribution $f \in \Delta(S)$, the principal's utility at the optimal transparent contract is $V^T(f),$ i.e. the extremal closure of the principal's value function evaluated at $f$.
\end{proposition}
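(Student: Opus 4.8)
The plan is to prove the two matching bounds that pin the optimal transparent payoff to $V^T(f)$, leaning on the machinery already developed in the proof of Theorem~\ref{thm:value_func_characterization}. The first thing I would record is a simplification of the extremal closure. Because the point masses $\{\delta_s : s \in S\}$ are exactly the vertices of the simplex $\Delta(S)$, they are affinely independent, so the only way to write $f$ as a convex combination $f = \sum_s \lambda_s \delta_s$ is with $\lambda_s = f(s)$. Hence the fiber of $\operatorname{co}(V(f') : f' \in I)$ over $f$ is the single point $\big(f, \sum_s f(s) V(\delta_s)\big)$, and the definition collapses to the clean formula $V^T(f) = \sum_{s \in S} f(s)\, V(\delta_s)$. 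This is the quantity I would target from both sides.

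Next I would unpack what transparency forces on the group compositions. Since $\sigma$ is bijective, each state $s$ is sent deterministically to a single label and this assignment is one-to-one, so each label $k$ corresponds to a unique state, which I denote $s_k$; consequently $\mu_s$ is a point mass. Substituting into the composition $\rho_k(s) = f(s)\mu_s(k)/\sum_{s'} f(s')\mu_{s'}(k)$ shows that $\rho_k = \delta_{s_k}$ is itself a point mass: every transparent contract communicates to each group a homogeneous population drawn from a single state. This is the structural fact that makes the extremal closure, rather than the full concave closure, the relevant object.

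For achievability I would take $K = S$ with the identity assignment, which is manifestly bijective, and for each $s$ install the optimal fully coarse contract for the singleton population $\delta_s$, which by definition yields $V(\delta_s)$. Consistency is automatic here: with only one state in the group, the observed distribution $P_{kq}$ coincides with $P_{kqs}$, so the communicated and realized contracts agree, which is exactly what transparency demands. Accounting contract-by-contract as in part~(i) of Theorem~\ref{thm:value_func_characterization} then gives principal utility $\sum_s f(s) V(\delta_s) = V^T(f)$. For the matching upper bound I would invoke part~(ii) of that theorem, which already established that any described contract satisfies $V \le \sum_k \big(\sum_s f(s)\mu_s(k)\big) V(\rho_k)$; specializing to a transparent contract, where $\rho_k = \delta_{s_k}$ and the mass of label $k$ is $f(s_k)$, the right-hand side is $\sum_k f(s_k) V(\delta_{s_k}) = \sum_s f(s) V(\delta_s) = V^T(f)$.

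The individual steps are short, so the work is conceptual rather than computational: the point to get right is the correct reading of ``$\sigma$ bijective'' as a deterministic one-to-one assignment of states to labels, together with the verification that this forces each $\rho_k$ to be an extreme point of $\Delta(S)$. Once that is in place, achievability reuses the contract-by-contract accounting of Theorem~\ref{thm:value_func_characterization}(i) and optimality is an immediate specialization of Theorem~\ref{thm:value_func_characterization}(ii), so I expect the only genuine care to be needed in stating the collapse of the extremal closure to $\sum_s f(s) V(\delta_s)$ cleanly and in confirming the singleton-group consistency claim.
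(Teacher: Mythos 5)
Your proof is correct and follows essentially the same route as the paper's: achievability via the bijective assignment $K=S$ with the optimal fully coarse contract at each point mass $\delta_s$, and the matching upper bound by noting that transparency forces each group composition $\rho_k$ to be degenerate, so the group-by-group bound collapses onto $\sum_s f(s)V(\delta_s)=V^T(f)$. The only cosmetic difference is that you cite the inequality from Theorem~\ref{thm:value_func_characterization}(ii) directly where the paper rederives it, and you state the collapse of the extremal closure to $\sum_s f(s)V(\delta_s)$ more explicitly than the paper does.
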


\begin{proof}
   We proceed in two steps. We consider an arbitrary point $V^T(f)$, and show: (i) that it is attained by a transparent contract, and (ii) that there is no transparent contract that achieves higher utility. 

   We begin with (i). A point $V^T(f)$ can be decomposed as $V^T(f) = \sum_{s \in S} \lambda_s V(\delta_s)$
   by the definition of $V^T$, where each $\delta_s$ is the point mass distribution at $s$. As the convex combination is extremal, it must be that $\lambda_s = f(s)$. Each $V(\delta_s)$ is the principal's utility at the optimal fully coarse contract $\hat g_s$ given that the population distribution is $\delta_s$, by the definition of the the principal's value function. So consider a described contract with each $\hat g_s$ as the fully coarse contract at $\delta_s,$ and each $g_s$ the (unique) realized contract that is consistent with $\hat g_s.$ Define $\sigma \colon S \to \Delta(K)$ such that 
   \begin{equation}\label{eq:transp_sigma}
          \mu_s(k) = \begin{cases} 1 & \text{ if }s=k \\ 0 & \text{ otherwise.}\end{cases} 
   \end{equation}
   Note that this $\sigma$ is bijective. The described contract with communicated contract $\hat g_s$ as the fully coarse contract at $\delta_s$ and $\sigma \colon S \to \Delta(K)$ the bijective sorting function that satisfies \eqref{eq:transp_sigma} (i.e. maps each state $s$ to contract $\hat g_s$) yields value $V^T(f)$ 
   \begin{align*}
   \sum_{k \in K} \sum_{s \in S} f(s) \mu_s(k) V(\delta_k) & = \sum_{s \in S} f(s)\sum_{k \in K}  \mu_s(k) V(\delta_k) \\
   & = \sum_{s \in S} f(s)\sum_{k \in K}  \mu_s(k) V(\delta_k) = \sum_{s \in S} f(s) V(\delta_s) =  \sum_{s \in S} \lambda_s V(f_s) .
   \end{align*}

   Next, we show (ii)---that there is no transparent contract that achieves higher utility at $f$ than $V^T(f).$ Consider a transparent contract with $((\hat g_s)_{s \in S}, (g_s)_{s \in S}, \sigma)$. As this contract is transparent, for each $s \in S$, there exists a unique $k \in K$ such that $\mu_s(k) = 1$. Note that this leads to distributions $\mu_k(s)$ that are concentrated on a single $s \in S$. It must be the case that each $\hat g_s$ and $g_s$ correspond to optimal coarse contracts on these degenerate distributions. Hence, the value from each contract is bounded from above by $\sum_{s \in S} f(s) \mu_s(k) V(\delta_{k})$, where again $\delta_k$ denotes a point-mass distribution at at $k$. In sum, the principal value from all groups is
   \[
   \sum_{k \in K} \sum_{s \in S} f(s) \mu_s(k) V(\delta_{k}) = \sum_{s \in S} f(s) \mu_s(s) V(\delta_{s}) \le  V^T(f),
   \]
   where the bound is a result of the convex combination being from points on the graph of $V$.
\end{proof}
This proposition is valuable because, together with Theorem \ref{thm:value_func_characterization}, it allows us to gain insight into the optimal described contract, without computing the optimal described contract directly. The optimal transparent contract is much easier to solve for than the optimal described contract: it requires only solving for the optimal contract for $|S|$ groups---it does not requiring searching through different possible combinations of groups. 

To recap: we have defined three objects. The principal's value function, the closure of the principal's value function, and the extremal closure of the principal's value function. These three objects are shown on the left in \autoref{fig:illustration_of_closures}, and give the principal's utility at the optimal fully coarse contract (Definition \ref{def:p_value_function}), the optimal described contract (Theorem \ref{thm:value_func_characterization}), and the optimal transparent contract (Proposition \ref{prop:transparent}), respectively. 

The characterization results thus far give us tools that will be helpful in understanding the principal's side of the problem---they will allow us to understand the shape of the optimal described contract and how much the principal benefits from its ability to offer opaque contracts. Given our motivation to understand regulatory problems, we also need tools to understand the agent's side of the problem---how does the principal's optimal described mechanism affect agent welfare?

So, next we define three objects concerning agent utility that are analogous to the three objects just defined for the principal. The \emph{agent's value function} is the agents' welfare at the principal's optimal fully coarse described mechanism, where \textit{agent welfare} for actions $a_k$, realized contract $g_k$ and sorting function $\sigma(s)$, is the average of the agents' ex-post utility, i.e.
$$ \sum_{s \in S} \sum_{k \in K} u(a_k, g_k(q,s))\mu_s(k)f(s).$$

\begin{definition}[Agent value function]
    The \emph{agent's value function} is given by 
\begin{equation}\label{eq:agent_value_function}
    U(f) \coloneqq \max \left\{ \E_{s\sim f}[u(a^*,g(q(a^*),s), s) ]  \, \middle\vert \begin{split}
    g &\colon Q \times S \to \Delta(X) \\\, a^* &\in \argmax_a \E[u(a, \hat g(q(a)))] \\ 0 &\leq  \E[u(a^*, \hat g(q(a^*)))]    \end{split} \right\}.
\end{equation}
\end{definition}

\begin{figure}[h!]
    \centering
    \begin{tikzpicture}[scale=0.7]
\begin{axis}[
legend style={at={(.8,.45)},anchor=north,draw=none, font=\Large},
    axis lines=middle,
    xlabel=$f \in \Delta(S)$,
    xlabel style={at={(axis description cs:.6,-0.055)},anchor=north},
    ylabel style={at={(axis description cs:-0.1,.75)},anchor=south},
    xmin=0, xmax=1.1,
    ymin=0, ymax=1.25,
    samples=100,
     xtick={.29,1},
    xticklabels={$s^*$,1},
    ytick=\empty,
     label style={font=\Large}
    ]

    % Add functions
    \addplot[domain=0:1,blue,smooth, line width=3pt] {.75-(.75-1.5*x)^3};
     \addlegendentry{$V(f)$}
\addplot[domain=0:.29,dashed,gray,line width=3pt] {.78-(.75-1.5*x)^3};
\addlegendentry{$\overline{V}(f)$}
    \addplot[domain=0:1, black, dotted, line width = 2pt] {.3281+.8509*x};
    \addlegendentry{$V^T(f)$}
\addplot[domain=.3:1,dashed,gray,line width=3pt] { 0.640563*x+0.56};

    % Add pointns
    \addplot+[only marks, mark=*, mark size=4, mark options={fill=black}, color=black] coordinates {(0,.3281) (1, 1.1826)};
    \addplot+[only marks, mark=square*, mark size=4, mark options={fill=black}, color=black] coordinates {(.29,.7256)};
\end{axis}
\node[text=black, align=center, above] at (current axis.north) {Principal Value};
\end{tikzpicture} \hspace{5mm}\begin{tikzpicture}[scale=0.7]
\begin{axis}[
legend style={at={(.8,.45)},anchor=north,draw=none,font=\Large},
    axis lines=middle,
    xlabel=$f \in \Delta(S)$,
    xlabel style={at={(axis description cs:.6,-0.055)},anchor=north},
    ylabel style={at={(axis description cs:-0.05,.75)},anchor=south,rotate=90},
    xmin=0, xmax=1.1,
    ymin=0, ymax=1.25,
    samples=100,
     xtick={.29,1},
    xticklabels={$s^*$,1},
    ytick=\empty,
     label style={font=\Large}
    ]

    % Add functions
    \addplot[domain=0:1,blue,smooth, line width=3pt] {.95-(.8-x)^2};
     \addlegendentry{$U(f)$}
\addplot[domain=0:.29,dashed,gray,line width=3pt] {.99-(.8-x)^2};
\addlegendentry{$\tilde{U}(f)$}
    \addplot[domain=0:1, black, dotted, line width = 2pt] {.31+.6*x};
    \addlegendentry{$U^T(f)$}
\addplot[domain=.3:1,dashed,gray,line width=3pt] {0.3*x+0.62};

    % Add pointns
    \addplot+[only marks, mark=*, mark size=4, mark options={fill=black}, color=black] coordinates {(0,.31) (1, .91)};
    \addplot+[only marks, mark=square*, mark size=4, mark options={fill=black}, color=black] coordinates {(.29,.7)};
\end{axis}
\node[text=black, align=center, above] at (current axis.north) {Agent Welfare};
\end{tikzpicture}
    \caption{Principal value and agent welfare at optimal described, fully coarse, and transparent contract, $|S|=2$.}
\label{fig:illustration_of_closures}
\end{figure}
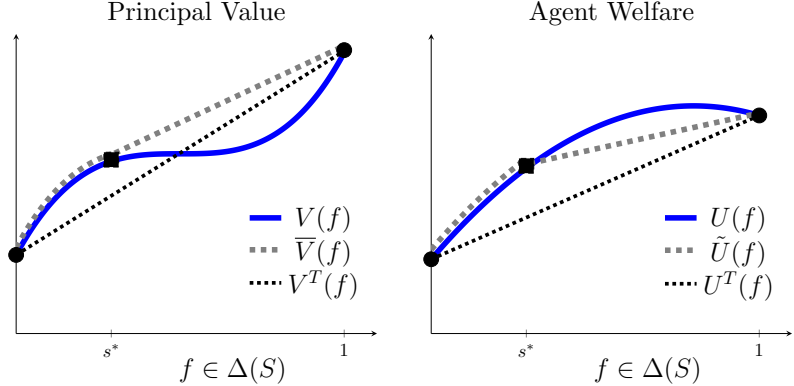

The agent value function is the agents' welfare under the principal's optimal choice of a coarsely described contract. We will next define the \emph{implied agent value function}. 

\begin{definition}[Agent implied value function]
Denote $\mathcal I$ the set of (inclusion-)maximal convex subsets $I \subseteq \Delta(S)$ such that $V(f) \neq \overline V(f)$. The agents' \emph{implied value function} is
\[
\tilde U (f) =\begin{cases}
    U(f) & \forall I \in \mathcal{I} : f \notin I \\ 
    \max \{ z \mid (f, z) \in \operatorname{co} (U|_{\partial I})\}&  \exists I \in \mathcal{I} \colon f \in I,
\end{cases}
\]
where $\partial I$ denotes the boundary of the set $I$.
\end{definition}
An example of the agent's implied value function is shown on the right in \autoref{fig:illustration_of_closures}, alongside the principal's value function. Recall, the figure presents the case where the state is binary, and we identify a distribution $f$ with the probability of one of the states. There are two important intervals in the closure of the principal's value function: on the interval $f \in [s^*, 1],$ the principal's value function is equal to its closure ($V(f) = \overline V(f)$);  on the interval $f \in [0,s^*]$ the principal's value function is not equal to its closure. So, in this case, there is a single maximal subset $I$ such that $V(f) \neq \overline V(f)$, and it is $I = [0,s^*]$ (and so $\mathcal I = \{I\}$).

These intervals define the agents' implied value function: on the interval $f \in [s^*,1]$, i.e. $f \not\in I$, the agents' implied value function is given by the agents' value function $U(f)$; on the interval $f \in [0,s^*]$, i.e. $f \in I$, the agent's implied value function is given by the convex hull of the graph of $U$ evaluated at the boundary of set $I.$

As we did for the principal's objective function, we next define the \textit{extremal closure of the agents' value function}. In the binary state case pictured in \autoref{fig:illustration_of_closures}, the extremal closure is a line connecting the points $U(0)$ and $U(1).$ In general, the extremal closure of the agents' value function is obtained by taking the supremum of the convex hull of the graph of the implied value function defined only on points $\delta_s$, i.e. points that represent a point-mass distribution at some $s \in S.$

\begin{definition}[Extremal closure of agent value function]
 The \emph{extremal closure of the agents' value function} is \begin{equation}
    U^T(f) \coloneqq \sup\{z \mid (f, z) \in \operatorname{co}(U(f') \colon f' \in I)\},
\end{equation}
where $I=\{\delta_s \colon s \in S\} \subset \Delta(S)$ is the set of point mass distributions $\delta_s$ at all $s \in S.$
\end{definition}
Analogous to Theorem \ref{thm:value_func_characterization} and Proposition \ref{prop:transparent}, the next proposition establishes that the agents' implied value function at $f$ is agents' welfare at the principal's optimal described contract, and that the extremal closure of the agents' value function at $f$ is agents' welfare at the principal's optimal transparent contract.
\begin{proposition}\label{prop:agent_characterization}
    Given a distribution $f\in \Delta(S)$, agent welfare at the optimal described mechanism is given by $\tilde U(f)$, and agent welfare at the optimal transparent mechanism is given by $U^T(f)$.
\end{proposition}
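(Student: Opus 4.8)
The plan is to mirror the two-part (achievability plus optimality) structure of the proofs of Theorem \ref{thm:value_func_characterization} and Proposition \ref{prop:transparent}, leaning on those results to pin down the form of the principal-optimal described and transparent contracts. The only genuinely new content is to track \emph{agent} welfare along the very decompositions those results already produce, and then to invoke the tie-breaking assumption that, among principal-optimal contracts, the principal selects the one maximizing agent welfare.

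For the transparent claim I would argue as follows. By Proposition \ref{prop:transparent} any principal-optimal transparent contract attains $V^T(f) = \sum_s f(s) V(\delta_s)$, and since $\sigma$ is a bijection each label $k$ is received only by agents of a single state $s = \sigma^{-1}(k)$, so the composition of group $k$ is the point mass $\delta_s$ with mass $f(s)$. Principal-optimality forces each group to run the optimal fully coarse contract at $\delta_s$, and the tie-breaking rule then makes that group's agent welfare equal to $U(\delta_s)$. Summing gives agent welfare $\sum_s f(s) U(\delta_s)$; because the $\delta_s$ are the affinely independent vertices of the simplex, the decomposition of $f$ into point masses is unique, so this value is forced and equals $U^T(f)$ by definition of the extremal closure. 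Both achievability and optimality are thus immediate, with no residual freedom in the decomposition.

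For the described claim I would split on whether $f$ lies in some $I \in \mathcal I$. If $f \notin I$ for every $I$, then $V(f) = \overline V(f)$, Theorem \ref{thm:value_func_characterization} gives that the fully coarse contract at $f$ is optimal, and its agent welfare is $U(f) = \tilde U(f)$. If instead $f \in I$, I would reuse the decomposition built in part (i) of Theorem \ref{thm:value_func_characterization}, namely $f = \sum_k \lambda_k \partial_k$ with $\partial_k \in \partial I$ and $\sum_k \lambda_k V(\partial_k) = \overline V(f)$, where group $k$ runs the optimal fully coarse contract at $\partial_k$. Running the agent-welfare bookkeeping with the same weights $\mu_s(k) = \lambda_k \partial_k(s)/f(s)$ used there shows agent welfare equals $\sum_k \lambda_k U(\partial_k)$, a convex combination of $U$ over $\partial I$; conversely, by part (ii) of that proof any principal-optimal described contract decomposes $f$ into groups whose compositions $\rho_k$ satisfy $V(\rho_k) = \overline V(\rho_k)$ and lie on the relevant supporting face of $\partial I$, with per-group welfare $U(\rho_k)$ after tie-breaking. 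Maximizing $\sum_k \lambda_k U(\partial_k)$ over these decompositions, which is exactly what the tie-breaking rule instructs, yields $\max\{z \mid (f,z) \in \operatorname{co}(U|_{\partial I})\} = \tilde U(f)$.

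The main obstacle, and the step I would treat most carefully, is the interaction between the principal-optimality constraint and the agent-welfare maximization in the described case. Theorem \ref{thm:value_func_characterization} guarantees that principal-optimal contracts place all mass on points where $V = \overline V$, but I must verify that (i) every such decomposition uses only boundary points of $I$ lying on a common supporting face, so that per-group welfare is genuinely $U(\partial_k)$, and (ii) the agent-welfare-maximizing choice among these principal-optimal decompositions is the full upper envelope $\operatorname{co}(U|_{\partial I})$ rather than the envelope over a strict sub-face. This is clean when $\overline V$ is affine on all of $I$, in particular in the binary-state case of Figure \ref{fig:illustration_of_closures}, where $\partial I$ is two points and the weights are forced; it requires more care when $\overline V|_I$ is only piecewise affine, since then not every convex combination of $\partial I$-points keeps the principal optimal. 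I would handle this by restricting attention to the supporting face containing $f$ and arguing that the tie-break optimum over principal-optimal decompositions coincides with the stated maximum over $\operatorname{co}(U|_{\partial I})$.
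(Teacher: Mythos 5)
Your proposal is correct and takes essentially the same route as the paper's proof: the same Carath\'{e}odory decomposition of $f$ into boundary points of $I$, the same assignment weights $\mu_s(k)=\lambda_k\partial_k(s)/f(s)$, and the same contract-by-contract bookkeeping yielding agent welfare $\sum_k \lambda_k U(\partial_k)$. If anything you are more thorough than the paper, which omits the transparent half as ``similar,'' establishes only achievability while merely asserting that this $\sigma$ ``yields the maximal agent utility,'' and silently passes over the supporting-face subtlety you flag (namely that an agent-welfare-maximizing decomposition over $\partial I$ is automatically principal-optimal only when $\overline V$ is affine on $I$, as in the binary-state case).
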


As the proofs for the two statements in Proposition \ref{prop:agent_characterization} are similar to the corresponding proofs for the principal, they can be found in \autoref{sec:proofs}.  

Note that there are many statistics of the distribution ex-post agent utility that may be of interest in applications. While welfare (i.e. expected ex-post agent utility) is perhaps the most natural one, other statistics such as the variance of the distribution may also lend important insight into how opacity affects equity. It is worth noting that our geometric characterization method works precisely because we are interested only in agent welfare, which is linear in individual agent's utility. Our characterizations in Proposition \ref{prop:agent_characterization} rely on the linearity of agent welfare in the same way that our characterization in Theorem \ref{thm:value_func_characterization} and Proposition \ref{prop:transparent} rely on the principal's risk neutrality.

\subsection{When is Opacity Optimal?}

We have stressed that solving for the optimal described contract may be difficult when the state space $S$ is large. In applications, it may suffice to know whether the optimal described contract has particular properties---such as whether the optimal described contract is opaque or transparent. Our geometric characterizations lead directly to a series of sufficient conditions that serve as tests for particular properties of the optimal described contract. We begin with a corollary of the characterization of the principal's utility at the optimal described contract (Theorem \ref{thm:value_func_characterization}). 
\begin{corollary}\label{cor:suff_cond_coarse}
If $V$ is globally weakly concave, i.e. $\frac{\partial^2 V}{\partial f^2} \leq 0$ for all $f \in \Delta(S)$, then there is an optimal described contract that is fully coarse.
\end{corollary}
Given the characterization in Theorem \ref{thm:value_func_characterization}, we know that in value function is equal to its concave closure if and only if there exists an optimal described contract that is fully coarse. This directly gives us Corollary \ref{cor:suff_cond_coarse}, a sufficient condition for an optimal described contract to be fully coarse, which depends only on the value function $V(f),$ and not its closure.

Note that if there is an optimal described contract that is fully coarse, the set of optimal contracts is infinite. The principal can choose a $\sigma$ that creates any partition of agents in which the composition of agent characteristics within each partition cell is the same as the distribution $f$. This is not the case when the optimal described contract is transparent. If there is an optimal described contract that is transparent, then it is the uniquely optimal described contract. This is because there is only one sorting function $\sigma$ that creates the partition that corresponds to the optimal transparent contract---the partition $\{\{s\} \colon s \in \}.$

Similarly, we can use the geometric characterization of the principal's optimal transparent contract to get a sufficient condition for when the principal's optimal described mechanism is transparent. 
\begin{corollary}\label{cor:suff_cond_trans}
The optimal described contract is transparent if $V(f)$ is globally weakly convex, i.e. $\frac{\partial^2 V}{\partial f^2} \geq 0$ for all $f \in \Delta(S)$.
\end{corollary}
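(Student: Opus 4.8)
The plan is to deduce the corollary directly from the two characterization results already in hand. By Theorem \ref{thm:value_func_characterization} the principal's payoff at the optimal described contract is $\overline V(f)$, and by Proposition \ref{prop:transparent} the payoff at the optimal transparent contract is $V^T(f)$. Since $V^T$ is the supremum of the convex hull of the graph of $V$ over the restricted set $I=\{\delta_s : s\in S\}$ while $\overline V$ is the supremum over all of $\Delta(S)$, we always have $V^T(f) \le \overline V(f)$ (equivalently, a transparent contract is a particular described contract). Hence it suffices to prove the reverse inequality under global convexity, i.e. to show that convexity of $V$ forces $\overline V(f) = V^T(f)$; the optimal transparent contract then attains the optimal described value, so there is an optimal described contract that is transparent.

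To show $\overline V = V^T$, I would work with the affine interpolant of the vertex values, $L(f) \coloneqq \sum_{s \in S} f(s)\, V(\delta_s)$, where $\delta_s$ is the point mass at $s$. First I would identify $V^T$ with $L$: because the vertices $\{\delta_s\}_{s\in S}$ of the simplex $\Delta(S)$ are affinely independent, the only way to write $(f,z)$ as a convex combination of the points $(\delta_s, V(\delta_s))$ is with weights $\lambda_s = f(s)$, so the supremum defining $V^T(f)$ is the single value $\sum_{s} f(s) V(\delta_s) = L(f)$. Next I would bound $\overline V$ above by $L$: global convexity of $V$ is exactly Jensen's inequality, $V(g) \le \sum_{s} g(s) V(\delta_s) = L(g)$ for every $g \in \Delta(S)$, so the entire graph of $V$ lies on or below the hyperplane $\{(f,z) : z = L(f)\}$. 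Being affine, that hyperplane also contains the convex hull of the graph, whence $\overline V(f) \le L(f)$. Finally, the vertex points $(\delta_s, V(\delta_s))$ lie in the graph of $V$, so their convex combinations are admissible in the definition of $\overline V$, giving $\overline V(f) \ge L(f)$. Combining, $\overline V(f) = L(f) = V^T(f)$.

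The main obstacle is the upper bound $\overline V \le L$: one must argue that convexity prevents any interior graph points from pushing the concave closure strictly above the vertex interpolation. This is precisely where the hypothesis is used, via the observation that an affine majorant of $V$ dominates the whole convex hull of the graph; I would state Jensen's inequality carefully for the finite-dimensional simplex and note that affinity of $L$ is what lets the bound pass from individual graph points to their convex hull. To conclude the corollary as stated, I would invoke the earlier remark that the transparent partition $\{\{s\} : s \in S\}$ is induced by the unique bijective sorting function, so that once a transparent contract is optimal it is the uniquely optimal described contract; hence the optimal described contract is transparent.
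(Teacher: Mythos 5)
Your proof is correct and takes essentially the same route the paper intends: the paper states this corollary without a separate proof, as a direct consequence of Theorem \ref{thm:value_func_characterization} and Proposition \ref{prop:transparent}, with convexity forcing the concave closure $\overline{V}$ to coincide with the extremal closure $V^T$ (and the preceding uniqueness remark upgrading ``there is an optimal transparent contract'' to ``the optimal described contract is transparent''). Your write-up just makes explicit the steps the paper leaves implicit---Jensen's inequality, the affine independence of the simplex vertices, and the affine-majorant bound on the convex hull of the graph.
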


A direct implication of Corollaries \ref{cor:suff_cond_coarse} and \ref{cor:suff_cond_trans} is that we can conclude whether the optimal described contract is coarse from understanding the concavity of the principal's coarse value function in only a region of $\Delta(S)$, and same for transparency. 
\begin{corollary}
If $V$ is convex in a neighborhood of some $f^* \in \Delta(S)$, then there is no optimal described contract that is fully coarse. If $V$ is concave in a neighborhood of some $f^* \in \Delta(S)$, then there is no optimal described contract that is transparent. 
\end{corollary}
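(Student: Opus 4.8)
The plan is to reduce both halves to the characterizations already in hand and then to exhibit a strict inequality at $f^*$. By Theorem~\ref{thm:value_func_characterization} the value of the optimal described contract at population $f$ is $\overline V(f)$, whereas the optimal fully coarse contract attains $V(f)$; hence a fully coarse contract is optimal if and only if $V(f)=\overline V(f)$. Likewise, by Proposition~\ref{prop:transparent} the optimal transparent contract attains $V^T(f)$, and since the extremal closure is built from the graph of $V$ restricted to the vertices $\{\delta_s\}$, a subset of the points used for the concave closure, we always have $V^T\le\overline V$; so a transparent contract is optimal if and only if $V^T(f)=\overline V(f)$. Thus the first statement needs $V(f^*)<\overline V(f^*)$ and the second needs $V^T(f^*)<\overline V(f^*)$. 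I read ``convex''/``concave'' here as \emph{strict}, since weak convexity (affineness) would be compatible with $V=\overline V$.

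For the first statement I would argue with a single chord. Assume $f^*$ is in the relative interior of $\Delta(S)$ and the neighborhood is small enough that $V$ is strictly convex on it. Pick $f_1,f_2$ in the neighborhood with $f^*=\tfrac12(f_1+f_2)$. Strict convexity along $[f_1,f_2]$ gives $V(f^*)<\tfrac12 V(f_1)+\tfrac12 V(f_2)$, and since $(f^*,\tfrac12 V(f_1)+\tfrac12 V(f_2))\in\operatorname{co}(V)$ we have $\tfrac12 V(f_1)+\tfrac12 V(f_2)\le\overline V(f^*)$. Chaining the two yields $V(f^*)<\overline V(f^*)$, so no fully coarse contract is optimal. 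This is exactly the local contrapositive of Corollary~\ref{cor:suff_cond_coarse}.

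For the second statement the intended mirror argument is that, where $V$ is strictly concave, the concave closure should coincide with $V$; then $\overline V$ would itself be strictly concave near $f^*$ and could not agree with the affine interpolant $V^T$, forcing $V^T(f^*)<\overline V(f^*)$ and excluding transparency. I would try to implement this by establishing $\overline V=V$ on a neighborhood of $f^*$ and then invoking that a strictly concave function cannot equal an affine one on an interval.

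The hard part, and the step I expect to be the genuine obstacle, is precisely the claim that $\overline V$ tracks $V$ near $f^*$. Local concavity of $V$ does not guarantee this: a strictly concave ``bump'' of $V$ can sit entirely below the bridging hyperplane of a maximal interval $I\in\mathcal I$, so that $\overline V$ remains affine across $f^*$ and $V^T$ and $\overline V$ still coincide, leaving transparency optimal. For instance with $\lvert S\rvert=2$ one may take $V(f)=1-f^2(1-f)^2(f-\tfrac12)^2$, which is strictly concave near $f^*=\tfrac12$ yet satisfies $V\le V^T\equiv 1=\overline V$. What actually drives the conclusion is not local concavity per se but the condition $V(f^*)>V^T(f^*)$, i.e.\ that $V$ exceeds its vertex interpolation at $f^*$, which immediately gives $\overline V(f^*)\ge V(f^*)>V^T(f^*)$. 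The crux of a correct proof is therefore to show that the local-concavity hypothesis delivers $V(f^*)>V^T(f^*)$ in the relevant setting (for example by using monotonicity/genericity of the model's value function, or by requiring $f^*$ to be an exposed point lying strictly above its chord), or else to replace the stated hypothesis by this inequality outright.
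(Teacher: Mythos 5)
Your reduction of both halves to the geometric characterizations is the right move, and note that the paper itself supplies no proof here: the corollary is asserted as a ``direct implication'' of Corollaries~\ref{cor:suff_cond_coarse} and~\ref{cor:suff_cond_trans}, which it is not, since those give \emph{global} sufficient conditions for optimality rather than local necessary ones. Your proof of the first half is correct and is presumably the argument the authors had in mind: by Theorem~\ref{thm:value_func_characterization} a fully coarse contract is optimal iff $V(f^*)=\overline V(f^*)$, and a chord through $f^*$ inside the neighborhood of strict convexity forces $V(f^*)<\tfrac12 V(f_1)+\tfrac12 V(f_2)\le \overline V(f^*)$. The two qualifications you impose are genuinely needed, not pedantry: an affine $V$ is weakly convex yet has fully coarse contracts optimal everywhere, and at a vertex $\delta_s$ one always has $\overline V(\delta_s)=V(\delta_s)$ (any convex combination of graph points sitting over the extreme point $\delta_s$ can only use graph points at $\delta_s$), so fully coarse is optimal there regardless of local curvature.

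For the second half you have identified a genuine gap, and it lies in the statement rather than in your argument. Transparency is optimal iff $V^T(f^*)=\overline V(f^*)$ (Proposition~\ref{prop:transparent} together with Theorem~\ref{thm:value_func_characterization}), and local concavity of $V$ near $f^*$ constrains neither side of that equality: your example $V(f)=1-f^2(1-f)^2\bigl(f-\tfrac12\bigr)^2$ with binary state is strictly concave near $f^*=\tfrac12$, yet $V\le 1$ with $V(0)=V(1)=V(\tfrac12)=1$, so $V^T\equiv\overline V\equiv 1$ and the optimal transparent contract remains an optimal described contract at $f^*$ (indeed so does the fully coarse one). The step that cannot be filled is precisely the one you flag: local concavity does not imply $\overline V=V$ near $f^*$, because a concave bump can sit weakly below a bridging chord of the concave closure. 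The conclusion does hold under either of the repairs you indicate: if concavity is assumed \emph{globally}, then $\overline V=V$, and strict concavity at an interior $f^*$ gives $V(f^*)>\sum_{s}f^*(s)V(\delta_s)=V^T(f^*)$, hence $V^T(f^*)<\overline V(f^*)$; or one can simply replace the hypothesis by $V(f^*)>V^T(f^*)$, which is what actually drives the exclusion of transparency. So your treatment is strictly more careful than the paper's: the first claim is true (with strictness and interiority), and the second is false as stated.
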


\subsection{The Value of Opacity}\label{sec:agent_welfare}

In addition to generating sufficient conditions that can be useful for determining whether the optimal described contract is transparent or fully coarse, we can use our geometric characterizations to understand how much the principal gains through the opportunity to be opaque. To motivate this exercise, note that the optimal transparent contract is a collection of contracts that solve a textbook principal-agent model with moral hazard. So, the difference between the optimal described contract and the optimal transparent contract gives us insight into how principals facing many agents with different observable payoff-relevant characteristics can benefit from jointly designing contracts and groups of agents to whom they are communicated. This can help to predict in which settings contracts are most likely to be opaque, and can illustrate, in settings where it may be costly to carry out an opaque scheme, whether opacity is worth it to the principal. 

    Formally, the \emph{value of opacity} at $f \in \Delta(S)$ is the difference between the optimal described contract at $f$ and the optimal transparent contract at $f$, i.e. $\overline{V}(f) - V^T(f)$.

    We show that, in a wide class of problems, as agents become more risk averse, the value of opacity diminishes. In order to demonstrate this, we restrict attention to a particular class of utility functions studied in the contracting literature. 

\begin{assumption}\label{ass:sep_utility}
    Agent utility $u(a, x)$ can be written $h(a)\tilde{u}(x) - c(a)$ where $\tilde{u}$ is real-valued, continuous, increasing, and concave, $c(a)$ is continuous, and $k(a)$ is continuous and strictly positive. 
\end{assumption}
Under these assumptions, the realized contract in any optimal described mechanism is deterministic.  

\begin{lemma}\label{lem:deterministic}
   Assume Assumption \ref{ass:sep_utility} holds. Then in any optimal described contract $((\hat g_k)_{k \in K},$ $( g_k)_{k \in K}, \sigma)$, for each $k \in K$, the realized contract $g_k(q,s)$ is deterministic, i.e. $g_k\colon Q \times S \to X.$ It follows that in any transparent contract, for every $k \in K$, the communicated $\hat g_k \colon$ is also deterministic. 
\end{lemma}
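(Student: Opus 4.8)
The plan is to prove that the realized contract must be deterministic by showing that any randomization in payments is strictly wasteful given the concavity of $\tilde u$. The key observation is that the agent's action choice depends only on the \emph{communicated} contract $\hat g_k$, not on the realized contract $g_k$. Consistency requires only that the observed distribution of outcomes $P_{kq}$ match $\hat P_{kq}$ for each output $q$. Crucially, the principal is risk-neutral and the agent's incentive and participation constraints are computed against $\hat g_k$ alone. This means that, fixing $\hat g_k$ and hence the action $a_k^*$ and the distribution over outputs $\pi(\cdot \mid a_k^*)$, the only role of the realized contract is to determine the actual payments the principal makes, subject to reproducing the right output-conditional outcome distribution in aggregate across states.

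First I would fix an arbitrary optimal described contract and an output $q$, and consider the realized payments $\{g_k(q,s)\}_{s \in S}$ across the states assigned (with positive probability) to contract $k$. The consistency constraint pins down the observed distribution $P_{kq}$, which is the $\mu_s(k)f(s)$-weighted mixture of the $P_{kqs}$. I would then argue that the principal's objective, for a fixed action profile, is a function of the realized outcomes only through the principal's payoff $v$, which is linear in the transfer by risk-neutrality, whereas the agent's welfare is irrelevant to optimality except through the tie-breaking rule. The essential point is that any \emph{stochastic} realized contract $g_k(q,s) \in \Delta(X)$ can be replaced by its conditional mean (or, in the monetary-transfer reading, a deterministic payment matching the relevant expectation) without changing the communicated contract's implied distribution $\hat P_{kq}$ in a way that violates consistency, while weakly improving the principal's payoff and strictly relaxing the slack created by the agent's concave $\tilde u$.

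The main obstacle will be making the replacement argument precise while respecting consistency: consistency is stated at the level of the \emph{observed} (state-aggregated, output-conditioned) distribution $P_{kq}$, not at the level of each individual $(q,s)$ pair. So I cannot naively deterministically collapse each $g_k(q,s)$ without checking that the aggregate $P_{kq}$ is preserved. The clean way around this is to note that the agent's expected utility and the action $a_k^*$ depend only on $\hat g_k$, so I am free to re-engineer the realized contract entirely as long as (a) the state-aggregated observed distribution $P_{kq}$ still equals $\hat P_{kq}$ for every $q$, and (b) the principal's expected payoff does not decrease. I would therefore exhibit, for each $k$ and $q$, a deterministic assignment $g_k(q,s) \in X$ whose aggregate still matches $\hat P_{kq}$ and whose expected principal payoff is at least as large; here the freedom to vary payments \emph{across states} (which enter $v$ with different weights through the cost structure encoded in $v$) is exactly what lets the principal do weakly better with determinism, by shifting payment mass toward states where paying is cheaper, precisely as in the introductory example.

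Finally, for the transparency claim, I would invoke Proposition~\ref{prop:transparent} together with the bijectivity of $\sigma$: under transparency each contract label $k$ corresponds to a single state, so the observed distribution $P_{kq}$ coincides with the single-state induced distribution $P_{kqs}$, and hence $\hat P_{kq} = P_{kq}$ forces $\hat g_k$ to equal the (now deterministic) realized contract $g_k$ pointwise. Thus $\hat g_k \colon Q \to X$ is deterministic as well, completing the argument. I expect the bulk of the work to be in verifying step~(a)--(b) of the replacement and in correctly handling the weighting $\mu_s(k)f(s)$ in the aggregation, rather than in any delicate analytic estimate.
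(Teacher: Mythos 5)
Your proposal correctly identifies the main obstacle---consistency is imposed on the state-aggregated distribution $P_{kq}$, so you cannot collapse each $g_k(q,s)$ to a point mass without disturbing it---but the resolution you choose does not work. Your step (a) asks for deterministic payments $x_{kqs}$ whose $\mu_s(k)f(s)$-weighted aggregate reproduces the \emph{original} $\hat P_{kq}$ exactly, so that $\hat g_k$ can stay fixed. This is impossible in general: a mixture of point masses over the finitely many states assigned to $k$ is supported on at most $\lvert S \rvert$ points, with atom sizes that are sums of the weights $w_s = \mu_s(k)f(s)/\sum_{s'}\mu_{s'}(k)f(s')$, whereas the observed distribution generated by genuinely stochastic realized contracts can have arbitrarily rich (even continuous) support. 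So there is typically no deterministic re-engineering of the realized contract that preserves the communicated contract literally. Moreover, the mechanism you invoke for the improvement (``shifting payment mass toward states where paying is cheaper'') is the opacity wedge that motivates the whole paper, not what drives this lemma; determinism here is driven purely by insurance.

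The paper's argument (an adaptation of \citet{holmstrom1979moral} and \citet{grossmanhart}, which it invokes by citation) changes the communicated contract along with the realized one, and exploits the separable form in Assumption \ref{ass:sep_utility}: the agent's action choice and IR constraint depend on $\hat g_k$ only through the numbers $\E[\tilde u(\hat g_k(q))]$ for each $q \in Q$, since her expected utility from action $a$ is $h(a)\sum_{q} \pi(q \mid a)\, \E[\tilde u(\hat g_k(q))] - c(a)$. Replace each non-degenerate lottery $g_k(q,s)$ by its certainty equivalent $x_{kqs}$, defined by $\tilde u(x_{kqs}) = \E[\tilde u(g_k(q,s))]$, and let the new communicated contract $\hat g_k'$ be the aggregate induced by these deterministic payments, so consistency holds by construction. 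Then $\E[\tilde u(\hat g_k'(q))] = \sum_s w_s \tilde u(x_{kqs}) = \E[\tilde u(\hat g_k(q))]$, so $a_k^*$ and IR are unchanged even though the communicated distribution has changed; and Jensen's inequality gives $x_{kqs} \le \E[g_k(q,s)]$, so the risk-neutral principal pays weakly less, and strictly less when $\tilde u$ is strictly concave and some lottery is non-degenerate---which is what rules out stochastic realized contracts at the optimum. In short: the invariant to preserve is the agent's utility statistic $\E[\tilde u(\cdot)]$, not the distribution $\hat P_{kq}$ itself; fixing the latter, as you propose, makes the construction infeasible. Your final paragraph on the transparent case is fine once determinism of the realized contracts is in hand.
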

This lemma follows directly from results in \citet{holmstrom1979moral} and \citet{grossmanhart} for the single agent moral hazard problem. Their arguments proceed by showing that any stochastic contract is dominated by a deterministic one. This fact holds because a principal in the canonical moral hazard problem faces a trade-off between providing incentives and providing insurance---giving higher payments for observable outputs creates higher incentives, but also increases the spread between outcomes in the agent's lottery. Under Assumption \ref{ass:sep_utility}, a stochastic contract, i.e. a contract that offers a lottery conditional on a particular output, increases risk for the agent without improving incentives. The same argument applies for realized contracts $g_k(q,s)$ in our setting.

Although, the tradeoff that the principal in our model faces in choosing a \textit{realized} contract is exactly the same as in the single agent setting, the principal's choice of a \textit{described contract} introduces a different kind of tradeoff  between incentives and insurance. When the principal faces a continuum of agents, and has the power to describe contracts, if opacity is valuable, then it also improves incentives \textit{on average} while introducing risk \textit{for all agents}. The value of opacity thus trades off three quantities: the \textit{incentive gains} from some agents taking a higher action than in the optimal transparent contract, the \textit{incentive losses} from some agents taking a lower action than in the optimal transparent contract, and the \textit{insurance losses} from introducing more uncertainty into the communicated contract (i.e. more uncertainty from the agent's perspective).

In the next Proposition, we consider a constant absolute risk aversion utility function with risk aversion parameter $\rho$, $\tilde u(x) = 1-e^{-\rho x}$ . In reference to a principal's value function with respect to $u_\rho$, we will use a subscript ${V}_\rho$.

\begin{proposition}
    Assume Assumption \ref{ass:sep_utility} holds. Then as agents become more risk averse, the value of opacity $\overline V_\rho(f) - V^T_\rho(f)$ converges to zero, i.e. 
    $$\lim_{\rho \to \infty} \overline{V}_\rho(f)- V^T_\rho(f) = 0.$$
\end{proposition}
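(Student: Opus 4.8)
The plan is to prove a \emph{uniform} upper bound $V_\rho(f') \le V^T_\rho(f') + L/\rho$ for every $f' \in \Delta(S)$, where $L$ is a Lipschitz constant of the principal's payoff $v$ in the payment argument, and then to push this bound through the concave closure. The reduction is clean: transparent contracts are a special case of described contracts, so $\overline V_\rho(f) \ge V^T_\rho(f)$ and the value of opacity is nonnegative; and by Proposition~\ref{prop:transparent} the extremal closure on the simplex is the affine map $f' \mapsto \sum_{s} f'(s)\, V_\rho(\delta_s)$. If the displayed inequality holds for all $f'$, then $V^T_\rho + L/\rho$ is an affine, hence concave, majorant of $V_\rho$; since $\overline V_\rho$ is by definition the least concave majorant, $\overline V_\rho(f) \le V^T_\rho(f) + L/\rho$. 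Combined with nonnegativity this yields $0 \le \overline V_\rho(f) - V^T_\rho(f) \le L/\rho \to 0$, which is the claim. Note this route sidesteps any direct manipulation of the Carath\'eodory decomposition of $\overline V_\rho$.

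To establish the uniform bound, fix $f'$ and take an optimal fully coarse contract attaining $V_\rho(f')$. By Lemma~\ref{lem:deterministic} its realized contract $g(q,s)$ is deterministic; since $|K|=1$ all agents see the same communicated contract $\hat g$ and take a common action $a^*$; and consistency forces the agent's conditional expectation of $e^{-\rho x}$ given output $q$ to equal $\sum_{s} f'(s) e^{-\rho g(q,s)}$. The key device is to define the \emph{certainty-equivalent payment} $w_q \coloneqq -\tfrac1\rho \log \sum_{s} f'(s) e^{-\rho g(q,s)}$, chosen precisely so that $e^{-\rho w_q}$ equals that conditional expectation. The agent's objective under $\hat g$ then coincides \emph{identically} with the objective under the deterministic contract $q \mapsto w_q$, so offering $w_{(\cdot)}$ to state $s$ transparently induces the same $a^*$, satisfies (IR), and is feasible for the single-state problem. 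Hence $V_\rho(\delta_s) \ge \sum_q \pi(q\mid a^*)\, v(a^*, w_q, s)$, and averaging over $s$ gives $V^T_\rho(f') \ge \sum_s f'(s) \sum_q \pi(q\mid a^*)\, v(a^*, w_q, s)$. Subtracting this from the identity $V_\rho(f') = \sum_s f'(s)\sum_q \pi(q\mid a^*)\, v(a^*, g(q,s), s)$ leaves me to bound $\sum_s f'(s)\sum_q \pi(q\mid a^*)\,\big(v(a^*,g(q,s),s) - v(a^*,w_q,s)\big)$.

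The final step is the risk-premium estimate. Splitting states by the sign of $g(q,s) - w_q$, monotonicity of $v$ in the payment makes every over-paid state ($g(q,s) \ge w_q$) contribute nonpositively, while Lipschitzness bounds each under-paid state's contribution by $L\,(w_q - g(q,s))_+$. The CARA structure then carries the argument: the defining relation $e^{-\rho w_q} = \sum_s f'(s) e^{-\rho g(q,s)}$ rearranges to $\sum_s f'(s)\, e^{\rho(w_q - g(q,s))} = 1$, and applying $e^t \ge 1+t$ on the under-paid states gives $\rho \sum_s f'(s)(w_q - g(q,s))_+ \le 1$, i.e.\ $\sum_s f'(s)(w_q - g(q,s))_+ \le 1/\rho$, \emph{uniformly} in $q$ and $f'$. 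Since $\sum_q \pi(q\mid a^*) = 1$, substituting gives $V_\rho(f') - V^T_\rho(f') \le L/\rho$, as required.

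I expect the uniform risk-premium bound to be the crux. The naive pointwise estimate $w_q - \min_s g(q,s) = O\!\big(\tfrac1\rho \log(1/f'_{\min})\big)$ blows up as $f'$ nears the boundary of the simplex and would not survive the concave-closure step, because the mixing weights implicit in $\overline V_\rho$ may drift toward that boundary as $\rho$ grows. Averaging against $f'$ \emph{before} estimating — exactly what the identity $\sum_s f'(s) e^{\rho(w_q - g(q,s))} = 1$ licenses — removes the dependence on $f'_{\min}$ and delivers a genuinely uniform bound. The argument also relies on $v$ being decreasing and Lipschitz in the payment (standard when the principal bears the cost of payments), used respectively to discard the over-paid states and to convert the small average payment gap into a small payoff gap; with only uniform continuity available, the linear averaging in $(w_q - g)_+$ would instead require a Jensen-type argument to preserve the vanishing rate.
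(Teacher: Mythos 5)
Your proof is correct, and it takes a genuinely different --- and in one respect more careful --- route than the paper's. Interestingly, both constructions end up offering the same transparent contract: the paper defines it as the realized payment $g(q,s)$ plus a compensating transfer making the agent indifferent with the communicated lottery, and under CARA utility that payment is algebraically exactly your certainty equivalent $w_q$. The difference is in how the principal's loss from these transfers is controlled. The paper argues qualitatively: it proves a claim that for any \emph{fixed} compound lottery and any $x>0$, a sufficiently risk-averse agent prefers the worst branch plus $x$ to the lottery, so the transfers vanish as $\rho \to \infty$. That argument yields no rate; its threshold $R$ depends on the lottery, hence on the optimal coarse contract, which itself varies with $\rho$; and it is applied only to the optimal fully coarse contract at $f$, so it directly compares $V_\rho(f)$ with $V^T_\rho(f)$, leaving the passage to $\overline V_\rho(f)$ --- whose Carath\'eodory decomposition points also move with $\rho$ and may drift toward the boundary of $\Delta(S)$ --- implicit. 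Your proof closes exactly these gaps: the identity $\sum_s f'(s)\, e^{\rho(w_q - g(q,s))} = 1$ together with $e^t \ge 1+t$ gives $\sum_s f'(s)\,(w_q-g(q,s))_+ \le 1/\rho$ uniformly in $q$, in $f'$, and in the contract, hence $V_\rho \le V^T_\rho + L/\rho$ on all of $\Delta(S)$, and since $V^T_\rho$ is affine this concave majorant dominates the least concave majorant $\overline V_\rho$ as well, yielding an explicit $O(1/\rho)$ rate. What you pay for this robustness is the assumption that $v$ is decreasing and Lipschitz in the payment --- not stated in the paper, though standard and satisfied in its application $v(a,g,s) = a(b_s - \tau_s g)$, and the paper's own proof implicitly needs continuity and the same sign structure anyway to convert vanishing transfers into vanishing principal losses.
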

\begin{proof}
Let $f \in \Delta(S)$ be a type distribution. Let $(\hat g, g)$ be an optimal fully coarse contract. We show that there is is a transparent contract $\tilde g_s$ whose principal utility approaches the one of $(\hat g, g)$.

Define the transparent contract $\tilde g_s(q)$ as the realized contract $g(q,s)$ with an additional payment for each $q$ and $s$ that makes agents indifferent between the lottery $\hat g_s(q)$ and $g_s(q,s)$. It must be that $g_s(q,s)$ is deterministic by Lemma \ref{lem:deterministic}. Note that under these contracts, agents choose the same action as under the coarse contract $\hat g$. The amount of transfer that the principal needs to pay for the agent satisfies
$\E[\tilde u_\rho (\hat g(q(a)))] = \E[\tilde u_\rho ( \tilde g(q(a)) + x)]$. Note that the only randomness in the second expectation is with respect to $q$. We show that for any $x > 0$ there is $R$ such that for all $\rho > R$, 
\begin{equation}
\E[\tilde u_\rho (\tilde g(q(a)))] < \E[\tilde u_\rho (\tilde g(q(a))+x) ].\label{eq:eventuallysmaller}
\end{equation}
This means that the amount of extra payment needed to make the agent indifferent converges to zero. We prove a more general statement about lotteries, which implies \eqref{eq:eventuallysmaller} for lotteries $l = \hat g (q(a))$ and $l_i = g_s(q,s)$ 
\begin{claim}
For every compound lottery $l = \sum_{i=1}^m p_i \cdot l_i$, every $i=1, 2, \dots, m$, and every $x > 0$, there is $R$ such that for all $\rho \ge R$, $\E[\tilde u_\rho (l)] < \E[\tilde u_\rho (l_i + x) ]$, where $l_i + x$ means the addition of x with certainty to all elements of $l_i$.
\end{claim}
This result means that a sufficiently risk averse agent will accept the worst part of a lottery plus any fixed positive amount. As we can unravel compound lotteries, we may, without loss, assume $l_i$ to be deterministic and $l$ be a (non-compound) lottery. We then have
\begin{align*}
1-\exp(-\rho(l_i + x)) > \sum_{j=1}^n p_i (1-\exp(-\rho l_j))& \iff \exp(-\rho(l_i + x)) < \sum_{j=1}^n p_i \exp(-\rho l_j)\\
&\iff
1 < \sum_{j=1}^n p_i \exp(-\rho (l_j - l_i - x) ).
\end{align*}
Note that by assumption, at least one of the terms $l_j - l_i - x$ must be negative (namely the one for $j=i$), which means that for high enough $\rho$, the corresponding summand $p_i\exp(\rho x)$ will be larger than $1$. As all other summands are nonnegative, this proves the claim. 
\end{proof}

Given the focus on transparency in regulatory conversations about algorithms, it is also valuable to understand how agent welfare under the optimal described contract (which may be opaque) compares to the optimal transparent contract. We next consider the \textit{welfare increase from opacity}, which is the difference between agent welfare at the optimal described contract for distribution $f \in \Delta(S)$ and the optimal transparent contract at distribution $f$, i.e. $\tilde U(f) - U^T(f).$ We present the following sufficient condition, which is a corollary to Proposition \ref{prop:agent_characterization}.
\begin{corollary}
If $U(f)$ is globally weakly concave, then the welfare increase from opacity is weakly positive. If $U(f)$ is globally weakly convex, then the welfare increase from opacity is agent welfare is weakly negative.
\end{corollary}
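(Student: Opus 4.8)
The plan is to reduce the statement to a pointwise comparison of the two value functions identified in Proposition \ref{prop:agent_characterization}. By that proposition, agent welfare at the optimal described mechanism equals $\tilde U(f)$ and agent welfare at the optimal transparent mechanism equals $U^T(f)$, so the welfare increase from opacity is exactly $\tilde U(f) - U^T(f)$. It therefore suffices to show that global weak concavity of $U$ forces $\tilde U(f) \ge U^T(f)$, while global weak convexity forces $\tilde U(f) \le U^T(f)$, for every $f \in \Delta(S)$.

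The first step is to record that $U^T$ is affine. Since the point masses $\{\delta_s : s \in S\}$ are precisely the extreme points of $\Delta(S)$, every $f$ admits the unique barycentric representation $f = \sum_{s} f(s)\,\delta_s$, and the extremal closure collapses to the linear interpolation $U^T(f) = \sum_{s} f(s)\,U(\delta_s)$. Writing $L(f) \coloneqq U^T(f)$ for this affine function, Jensen's inequality supplies the pointwise domination I need: if $U$ is weakly concave then $U(f) \ge L(f)$ for all $f$, whereas if $U$ is weakly convex then $U(f) \le L(f)$ for all $f$.

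With these facts in hand, I would split according to the two branches in the definition of $\tilde U$. On the branch where $f \notin I$ for all $I \in \mathcal I$, we have $\tilde U(f) = U(f)$ and the conclusion is immediate from the pointwise bound. The substantive case is $f \in I$ for some $I \in \mathcal I$, where $\tilde U(f) = \max\{z \mid (f,z) \in \operatorname{co}(U|_{\partial I})\}$ is a concave-closure operation carried out over the boundary $\partial I$. In the concave case I would take any decomposition $f = \sum_k \lambda_k \partial_k$ with $\partial_k \in \partial I$; the point $\bigl(f, \sum_k \lambda_k U(\partial_k)\bigr)$ lies in $\operatorname{co}(U|_{\partial I})$, so $\tilde U(f) \ge \sum_k \lambda_k U(\partial_k) \ge \sum_k \lambda_k L(\partial_k) = L(f) = U^T(f)$, the last equality being exactly the affineness of $L$. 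In the convex case I would instead observe that any $(f,z) \in \operatorname{co}(U|_{\partial I})$ is a convex combination $\sum_k \lambda_k (\partial_k, U(\partial_k))$, so $z = \sum_k \lambda_k U(\partial_k) \le \sum_k \lambda_k L(\partial_k) = L(f) = U^T(f)$; taking the supremum over admissible $z$ gives $\tilde U(f) \le U^T(f)$.

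The main obstacle is precisely this last case, $f \in I$: the functions $\tilde U$ and $U^T$ are formed by taking upper convex hulls over \emph{different} supporting sets---the relative boundary $\partial I$ versus the vertices of the simplex---so a bare pointwise comparison of $U$ against $U^T$ does not by itself settle the direction of the inequality. The device that makes it go through is the affineness of $U^T$: because an affine function is simultaneously its own upper and lower envelope, the pointwise inequality $U \gtrless L$ available on $\partial I$ can be transported through the convex-hull construction without loss, which is exactly what the two chains of inequalities above exploit. Once that observation is isolated, the remaining bookkeeping is routine.
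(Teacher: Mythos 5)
Your proposal is correct and takes essentially the same route as the paper, which states this result as an immediate corollary of Proposition \ref{prop:agent_characterization}: you identify the welfare increase from opacity with $\tilde U(f) - U^T(f)$, observe that $U^T$ is the affine interpolation of $U$ at the vertices $\{\delta_s\}$, and transport the Jensen inequality through both branches of the definition of $\tilde U$ using that affineness. The case analysis over $f \in I$ versus $f \notin I$ and the convex-combination bookkeeping are exactly the details the paper leaves implicit, so there is nothing to flag.
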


\section{Application: Design and Communication of Ride-Hailing Payments}\label{sec:application}

We next turn to an application: a ride-sharing platform (principal) choosing how to jointly design and describe payment schemes to drivers (agents). 

We focus on this setting because ride-hailing platforms use many data to set payment schemes, and tailor their communications to different drivers. One commentator summarizes  ride-hailing platforms' communication practices with drivers as follows: ``using what they know about drivers, their control over the interface and the terms of transaction, they channel the behavior of the driver in the direction they want it to go."\footnote{\url{https://www.nytimes.com/interactive/2017/04/02/technology/uber-drivers-psychological-tricks.html}}

\subsection{Setting}
We consider drivers located in the outskirts of the city, who exert effort $a$ to drive into the city to increase their chance of picking up a passenger. Without loss, we let $a$ be the probability that the driver picks up a passenger, given that she exerts effort $a$. The platform cannot contract on the distance driven into the city ($a$) because it is difficult to verify. The binary contractible output $q \in \{0,1\}$ that is correlated with the drivers' effort is whether the driver picks up a passenger $(q=1)$. The platform's output-contingent contract is thus a per-ride payment for the driver, $g$. 

We assume that drivers are risk averse in money (with square root utility), and that effort $a$ is quadratic. The drivers' utility is given by $u(a,g) = a \sqrt{g}- \frac12 a^2.$

The platform has plentiful data on the drivers. A particular characteristic $s$ of the drivers that is of interest to the platform is whether drivers tend to drive during high-demand ($s=h$) periods or low-demand $(s=\ell)$ periods. The low state occurs with probability $f(\ell) = \alpha$. 

The platform is risk-neutral and the state affects their utility function in the following way. When a driver completes a ride, the platform earns $b_s$ and pays the driver $\tau_s g$, where $b_s$ and $\tau_s$ are constants that depend on the state. So, the platform's objective in state $s$ is given by $v(a,g,s) = a(b_s - \tau_s g),$ where $b_s,\tau_s >0$ for all $s.$ 

The ``productivity" parameter $b_s$ can be understood as the commission that the platform takes in state $s$. For instance, if $b_\ell = 1$ and $b_h>1,$ it would imply that the platform can take a higher commission during high-demand periods---this may be the case if the platform charges surge fares to passengers without proportionally increasing driver pay.\footnote{Although it's hard to find clear publicly available information about ride-hailing platforms' pricing strategies, it is widely documented that the platform drivers' compensation is not increased in proportion to surging passenger fares. See, e.g. \url{https://www.washingtonpost.com/technology/2021/06/09/uber-lyft-drivers-price-hike/}.} Meanwhile, the ``effective payment" parameter $\tau_s$ can be understood as the effective cost of paying the driver \$$g$ per ride. For example, if $\tau_\ell >\tau_h,$ it would imply that paying a driver \$$g$ per ride costs more in the low-demand state than it costs to pay a driver \$$g$ per ride in the high-demand state. This may be the case if the platform is operating in a jurisdiction where they must compensate drivers for idling time, as is the case in New York City.\footnote{\url{https://www.theverge.com/2019/2/1/18206737/nyc-driver-wage-law-uber-lyft-via-juno}} If the platform must pay drivers for idling time, then, when demand is low ($s=\ell$), the platform pays more per ride than just the per-ride payment \$$g$.  

The platform chooses payment schemes and messaging tactics about their payment schemes to incentivize (especially inexperienced) drivers to drive into the city. A communicated contract is a communication that summarizes the payment scheme, perhaps sent as a reminder to incentivize the driver to get on the road. For example, the platform could send a message such as ``Drivers in your area are currently earning \$10 per ride on average." Or, it might be more specific and send different messages to the two types of drivers: to the $s=h$ drivers, it could say ``Since you tend to drive when demand is high, you will earn \$15 per ride." 

In addition to choosing the communicated contract, the platform can strategically sort into different cohorts who receive the same message. The platform can stagger its' notifications so that different groups of agents get different messages at different times. For legal reasons, the platform cannot lie---their communicated payment rules must be \textit{consistent} with the realized payment rules within each group.

From a regulatory perspective, it is valuable to know when the platform offers a transparent described contract. In such settings, the minimal legal requirement of ``consistency" is strong enough to make the platform tell agents everything there is to know about their payment schemes---imposing transparency requirements would be redundant.

\subsection{Discussion}

Suppose that it costs the same amount to pay drivers in both states and when demand is high $(s=h)$, drivers are more productive $(b_h>b_\ell)$. The difference in value of the drivers' effort comes from the platform's ability to charge surge pricing to passengers---the platform can charge passengers more for rides when there is high demand.
\begin{remark}
 Assume $\tau_\ell=\tau_h$ and $b_\ell<b_h$. Then the optimal described contract is transparent. 
\end{remark}

\begin{figure}[h!]
    \centering
    \includegraphics[width=.45\textwidth]{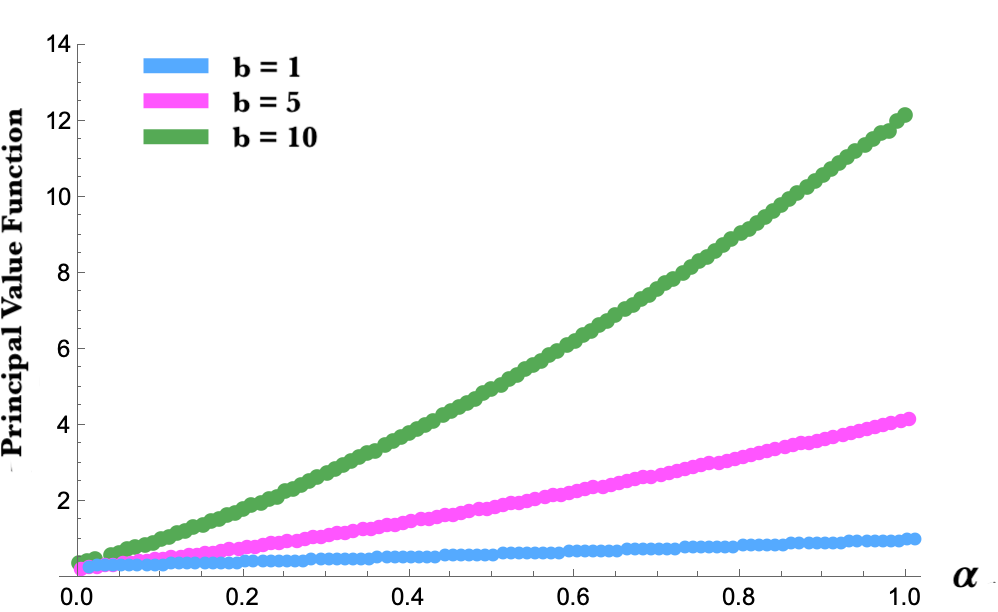}   \includegraphics[width=.45\textwidth]{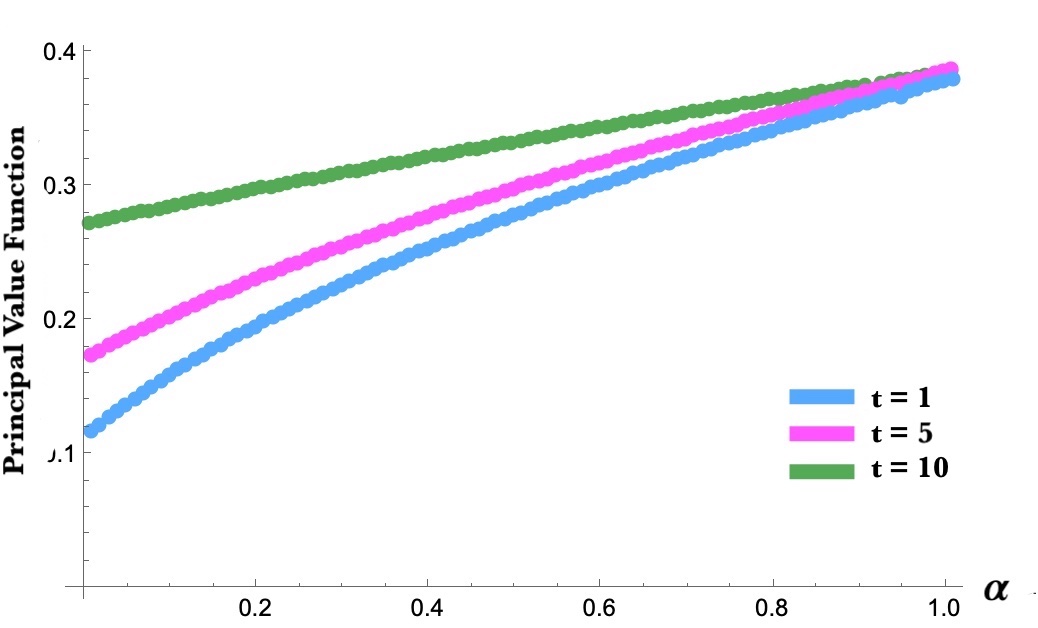}
    \caption{Principal Value Function: $\tau=1, b \in \{1,5,10\}$ (left); $b=1, \tau \in \{1,5,10\}$ (right) }
    \label{fig:uber_ex}
\end{figure}
The platform's optimal described contract is transparent coarse in this case, and \autoref{fig:uber_ex} shows that the principal's value function is globally weakly convex for $b_h \in \{1, 5, 10\}$. In this case, the platform would say specifically to drivers who drive in high demand periods, ``If you drive when demand is high, your compensation will be \$X." And to drivers who drive in low demand periods, they would say ``If you drive when demand is low, your compensation will be \$Y." This is because the platform wants to incentivize drivers to get on the road when demand is high, and there's no benefit to introducing extra uncertainty into the contract. 

Next, suppose that the only difference between drivers is the effective cost of paying them.
\begin{remark}
Assume $b_\ell = b_h$ and $\tau_\ell > \tau_h.$ Then the optimal described mechanism is fully coarse.
\end{remark}
In this case, the platform's optimal strategy is to give an opaque description of the payments, and not mention how payments will in fact depend on demand. This makes drivers in the low-demand state think that there is some chance that they get higher payments, and so the principal can induce higher actions from them without paying the extra per-dollar cost. As \autoref{fig:uber_ex} shows, the principal's value function is globally concave for $\tau \in\{1,5,10\}. $

\section{Related Literature}\label{sec:lit}

 This paper relates most closely to the literature on mechanism design with an informed principal, and models of joint information design and mechanism design. 
 
 In the canonical informed principal problem, the principal has private information that the agent doesn't have. The agent takes the principal's announcement of a contract as a signal about the principal's private information, and the agent solves for a Bayes-Nash equilibrium of the signaling game. Though originally introduced in screening settings \citep{myerson1983mechanism,maskin1990principal,maskin1992principal}, the informed principal problem has also been studied in the presence of moral hazard \citep{beaudry1994informed, inderst2001incentive,mekonnen2021informed,clark2021informed}.
    
   Our set up is similar to such models in that the principal has information that the agent doesn't have---here the principal's private information is about the contract itself. But our model also contrasts with the informed principal problem in that the agent takes the principal's announcement ``at face value," subject to a consistency condition. That is, the agent does not take the contract as a ``signal" about the principal's private information. In some settings of interest, the agent may not even know that she is less informed than the principal, and we connect then to models of contracting with unawareness \citep{filiz2012incorporating,auster2013asymmetric}. In other settings that motivate our investigation, there are too many factors that could influence the agent's belief about what the contract is---in the example of \autoref{sec:application}, a ride-hailing platform driver and knows that the platform uses its extensive data to compute pricing schemes but cannot articulate the platform's type space or form a belief about it. Similar to models of add-on pricing \citep{ellison2005model} and shrouded attributes \citep{gabaix2006shrouded}, if the principal doesn't mention a relevant feature of the agents' environment, the agent simply won't think about it.  The model is thus similar in spirit to cursed equilibrium. In cursed equilibrium, players in a game fail to update their beliefs based on the information content of the other player's action  \citep{eyster2005cursed,esponda2008behavioral}. In our model, the agent does not condition on the ``information content" of the principal's choice of description---she does not solve for the equilibrium in a signaling game. %Our model can be understood as an informed principal problem in which the agent is ``cursed" in the sense that she does not update her belief about the principal's private information after seeing the contract. However, again we stress that in some settings of interest, it may be unrealistic to think that the agent could form such beliefs. 

The literature on persuasion and information design also assumes that agents (or ``receivers") do not think strategically about the principal's (or ``sender's") strategy---they don't have to because the principal commits to a signal structure about the state (see \citet{bergemann2019information} for a review). The agent receives a signal about an unknown state, updates her belief about the state given her knowledge of the signal structure to which the principal has committed, and takes an action. Although our analysis relies on the same techniques used in this literature, we differ in both substance and interpretation.

First, in pure information design, the principal changes the agent's utility from taking a particular action by changing her beliefs, rather than changing how she is paid. In our model, the principal jointly designs payments and the ``information" that agents have. Although there is an emerging literature developing models that join information design and mechanism design \citep{dworczak2020mechanism,boleslavsky2018bayesian,bergemann2022,kwak2022optimal}, these papers concern a principal who can design agents' information about an exogenous feature of the environment. For example, \citet{kwak2022optimal} in particular studies a moral hazard problem in which an exogenous unknown state affects output alongside the agent's action. The principal can design an experiment on this unknown state, controlling what the agent believes about her environment. By contrast, in our model, the principal controls what the agent believes about the contract itself.

Further, our model differs from information design models in interpretation. In our model, the agent does not know that the principal has committed to a signal structure and does not update her belief according to Bayes' rule taking the signal structure into account. Instead, in our model, all that a particular agent takes into account when making her decision is the (possibly stochastic) contract $\hat g_k(q)$ that is communicated to her. The fact that the principal's realized contracts $\hat g_k(q,s)$ have to be consistent with the communicated contracts disciplines the problem so that the principal has some flexibiliity in description, but is not in an ``anything-goes" environment. 

Despite these differences, it should not be too surprising that the analyses we carry out here can be entirely understood through the lens of persuasion. Our problem is in fact equivalent to a ``contracting with persuasion" problem in which the agent has a prior on an unknown state $s,$ and the principal commits to a contract $\psi(m,q,s)$ and a signal structure $\phi: M \to \Delta(S).$ The agent sees a signal realization $m \sim \phi(s)$ and then forms a posterior belief about the contract she faces, taking into account the principal's commitment. This is not the model we present for two reasons: it is does not well capture agents' often limited understanding of their environments in the settings of interest to us, and it relies on the assumption that principals can commit to signal structures. As such, the persuasion model to which ours most closely relates is \citet{lin2022credible}, which derives conditions under which it is without loss to replace the commitment assumption with a ``credibility" requirement. Their credibility requirement resembles our consistency requirement (though without payments): a disclosure policy is credible if the principal cannot profit from tampering with her signals while keeping the signal distribution unchanged.

\section{Conclusion}\label{sec:conclusion}
This paper introduced \textit{described contracts} into a moral hazard framework in which a principal contracts with a continuum of agents. We provided a geometric characterization of the optimal described contract, and conditions under which it is transparent, and conversely, opaque. We showed how opacity alters the canonical contracting tradeoff between incentives and insurance: opacity may create an (on average) profitable wedge between agents' expected and realized payments. 

In future work, we hope to better understand the effects of opacity on agents, as regulations that call for more transparency draw on arguments about how transparency affects agents like consumers  and workers. In this paper, we looked only at agent welfare---but opaque contracts lead to greater heterogeneity in ex-post agent outcomes than transparent mechanisms. So, if a regulator cares about ``equal treatment" of agents who have different characteristics, then allowing for described mechanisms is detrimental to the cause (compared to transparent or uniform contracts). In addition, opaque contracts, although they may lead to higher average welfare, also may entail violations of agents' \textit{ex-post} individual rationality. And even if agents are not led into a violation of their ex-post rationality constraint, they are ``misled" in the sense that their ex-post utility under the described mechanism is different from their ex-post utility \textit{had they known everything that the principal knew} about the intended payment.

%\begin{acks}
 %We thank especially Eric Maskin, Ben Golub and Shengwu Li. We also thank Benji Niswonger, David Laibson, Ed Glaeser, Kathryn Holston, Jeremy Stein, Joshua Schwartzstein, Laura Adler, Lily Hu, Matthew Rabin, Mohammad Akbarpour, Shani Cohen, Yannai Gonczarowski, Zachary Wojtowicz, Zak Leazer, Zo\"e Cullen.
%\end{acks}

% Bibliography
\bibliographystyle{aer}
\bibliography{refs}

% Appendix
\appendix

\section{Proofs}\label{sec:proofs}

\textbf{Proof of Proposition \ref{prop:agent_characterization}.}
\begin{proof}
    Consider an arbitrary point $(f, U^*)$ where $U^* = \tilde U(f).$ If $\tilde U(f) = U(f)$ then $\tilde U(f)$ describes the average agent welfare at the optimal fully coarse contract, by definition. Since $\tilde U(f) = U(f)$ implies that $\overline V(f) = V(f),$ then at this point, the principal's optimal described mechanism is fully coarse (by Theorem \ref{thm:value_func_characterization}). 

    Next, consider points $f \in \Delta(S)$ for which $\tilde U(f) \neq U(f)$ and thus $V(f) \neq \overline V(f).$ Then there is a maximal convex subset $I \subseteq \Delta(S)$ containing point $f$ and points $f' \colon V(f') \neq \overline V(f')$. Denote the boundary of the set $I$ by $\partial I$. By Carath\'{e}odory's theorem, there is a set of at most $|S|$ boundary points $\{\partial_1, \partial_2, \dots, \partial_{|S|}\}$ such that
    $$\tilde U(f) = \sum_{k=1}^{|S|} \lambda_k U(\partial_k).$$
    Note that each boundary point $\partial_k$ is associated with a principal-optimal fully coarse contract for distribution $\partial_k$. Denote by $\hat g_k$ the optimal fully coarse contract at $\partial_k$, and by $g_k$ the (unique) realized contract consistent with $\hat g_k$. Now define $\sigma \colon S \to \Delta(K)$ such that $\mu_s(k)$ satisfies 
    $$ \mu_s(k) = \frac{\lambda_k \partial_k(s)}{f(s)}. $$
    Note that this choice of $\sigma$ is only one of the ones that the Principal is indifferent between. It yields, however, the maximal agent utility.

    We now show that the described contract $(\hat g_k)_{k \in K}, (g_k)_{k \in K}, \sigma)$ as defined gives the agents welfare of $\tilde U(f).$ Note that contracts $(\hat g_k, g_k)$ yield welfare $U(\partial_k)$ by the definition of $U$ (because these $(\hat g_k, g_k)$ are principal optimal fully coarse contracts). Agent welfare can thus be written:

    \begin{align*}
        \sum_{s \in S} \sum_{k \in K} u(a_k, g_k(q,s))\mu_s(k)f(s) & =   \sum_{s \in S} \sum_{k \in K} U(\partial_k)\mu_s(k)f(s)  \\
        &= \sum_{s \in S} \sum_{k \in K} U(\partial_k)\frac{\lambda_k \partial_k(s)}{f(s)}f(s)  \\
          &= \sum_{k \in K} U(\partial_k)\lambda_k  \sum_{s \in S}\partial_k(s) \\
             &=  \sum_{k \in K} U(\partial_k)\lambda_k  \\
             & = \tilde U(f)
    \end{align*}

By Theorem \ref{thm:value_func_characterization}, we know that $(\hat g_k)_{k \in K}, (g_k)_{k \in K}, \sigma)$ (where each $\hat g_k$ is the optimal fully coarse contract at $\partial_k$ and $\sigma$ is defined above) is the principal's optimal described contract.

\end{proof}

\section{Introductory Example: The Role of Risk Aversion}\label{sec:intro_example}

To contrast with the introductory example in \autoref{sec:intro_example}, we produce here the exact same example but with risk neutral agents. 

A firm that plans to introduce year-end bonuses for employees. The firm will offer a bonus schemes to employees who meet their performance targets, and commits to the schemes they communicate ex-ante (either because they have to commit, for legal reasons, or because they believe that honoring commitments is important for employee retention). 

The firm has a wealth of data on employee performance that allows the executive to estimate the effective cost of paying each employee a bonus of \$x. These data reveal trends about employees by division: employees in Division 0 are more expensive to incentivize---inclusive of taxes, administrative and compliance costs, and follow-on profits, it costs four times more per dollar to incentivize employees in Division 0 than it does to incentivize employees in Division 1. 

Suppose the employees are risk-neutral in money with utility functions $u(a,x) = a x - \frac12 a^2$, where $a$ is effort (and the probability that they meet their performance target). Then, assuming that: the divisions are of equal size, the firm is risk neutral, the value to the firm of each employee meeting their performance target is the same, the firm's optimal division-wide bonuses scheme would be to say

\begin{center}

\fbox{\parbox{3.85in}{\textsc{Division 0}\medskip

{\small Employees who meet their performance targets receive bonuses of $\nicefrac12$.

\vspace{2mm}

\hfill ($\hat x_{0}$)}}}

\vspace{2mm}

\fbox{\parbox{3.85in}{ \textsc{Division 1}\medskip

{\small Employees who meet their performance targets receive bonuses of $2$.

\vspace{2mm}

\hfill ($\hat x_{1}$)}}}
\end{center}
and to actually give bonuses $x_0 = \frac12$ and $x_1=2.$ Assuming that the value of each employee meeting their target is normalized to 1, the firms total payoff from the bonus program is given by $v = \frac12 a_0(1 -  x_0) + \frac12 a_1(1 -  \frac{1}{4}x_1) =\frac12 (x_0)^(\frac{1}{2})(1 -  x_0) + \frac12 (x_1)^(\frac{1}{2})(1 -  \frac{1}{4} x_1) =\nicefrac{5}{8}.$ The employees' optimal actions are determined by the first-order condition of their utility with respect to $a$.

Now suppose the firm decided to instead make a firm-wide commitment. Taking advantage of the fact that the firm is both \textit{designing the bonus scheme} and \textit{choosing what to say about it}, they could take a more opaque approach. They could say: 

\begin{center}

\fbox{\parbox{3.85in}{\textsc{All Employees}\medskip

{\small Employees who meet their performance targets receive an average bonus of $2$. 

\hfill ($\hat x_{01}$)}}}
\end{center}
and in fact carry out the scheme
$$ x_{01} = \begin{cases} 0 & \text{if in Division 0}\\ 4 & \text{if in Division 1}.\end{cases} $$
Under this scheme, all employees take the same action $(a=2)$ assuming that the contract they face is equally likely to be $0$ or 4. The principal's payoff is  $v = \nicefrac12 a(1 -  x_{01}(0)) + \nicefrac12 a\left(1 -  \nicefrac{1}{4}x_{01}(1)\right) = 1.$ 

 So, the firm can do better by introducing a firm-wide, instead of division-by-division bonus scheme. Note that this assumes that: (i) the employees believe the firm's commitment to the stochastic scheme, and (ii) the employees have no way of learning that in fact the Division 0 employees will get the lower bonuses while the Division 1 employees will get the higher bonuses. 

Notice that the benefit from introducing the optimal opaque scheme in this example, where agents are risk neutral, is greater than the benefit from introducing the optimal opaque scheme in the example in \autoref{sec:intro_example_body}, where the agents are risk averse. This is because the opaque scheme introduces uncertainty into the contract---this uncertainty benefits the principal because it allows the principal to create a wedge between the agents' expectations of their outcome and their actual outcome, but it also hurts the principal because the uncertainty leads risk averse agents to take a lower action. When the agents are risk neutral, this tradeoff disappears.

\section{No-arbitrary-differential-treatment}\label{sec:coarse}

In a described contract that satisfies no-arbitrary-differential treatment, if the principal communicates a different mechanism to agents $i$ and $j$ with $i\neq j$, it must be that the agents have different characteristics. The no-arbitrary-differential-treatment constraint can be thought of as a fairness requirement. The basic idea is that if two agents do not differ in $s$, their observable characteristic, then both agents should receive the same communicated contract $\hat g_k.$

\begin{definition}
A described contract $((\hat g_k)_{k \in K}, (g_k)_{k \in K}, \sigma)$ satisfies \emph{no-arbitrary-differential-treatment} if $\sigma$ is injective.
\end{definition}
Note that if $\sigma$ is injective, then $g_k(s,q)=g_k(q)$. That is, it is redundant to specify how the realized contract depends on both $k$ and $s$, since each $s$ maps to a single group $k$ by $\sigma$. 

We provide a geometric characterization of the principal's optimal described contract that satisfies no-arbitrary-differential-treatment.

\begin{definition}[Orthogonal concave closure]
The \emph{orthogonal concave closure} $\overline{V}^{\operatorname{orth}}$ of a function $V$ is given by 
$$ \overline{V}^{\operatorname{ort}}  \coloneqq \sup \left\{z \mid (F, z) \in \operatorname{co}^{\operatorname{ort}}(\operatorname{Graph}(V))\right \},$$
where $co^{\operatorname{ort}}(\operatorname{Graph}(V))$ is the orthogonal convex hull of the graph of $V$, i.e.
$$\operatorname{co}^{\operatorname{ort}}(\operatorname{Graph}(V)) \coloneqq \left\{\sum_{j} \lambda_j V(F_j) \colon \sum_j \lambda_j = 1, \lambda_j \geq 0, \lambda_i, \lambda_j > 0 \Rightarrow \langle F_i, F_j \rangle = 0\right\}. $$ 
\end{definition}
The orthogonal convex hull of the graph of $V$ is the hull of only those convex combinations of $V$ for which all vectors with a nonzero coefficient are orthogonal.

\begin{proposition}
For a given distribution $f \in \Delta(S)$, the value of the optimal described mechanism that satisfies no-arbitrary-differential-treatment lies on the orthogonal concave closure of the principal's value function.
\end{proposition}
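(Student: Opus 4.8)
The plan is to prove the statement in the same two-part fashion as the proofs of Theorem~\ref{thm:value_func_characterization} and Proposition~\ref{prop:transparent}: fix $f\in\Delta(S)$, set the target value $V^\ast \coloneqq \overline{V}^{\operatorname{ort}}(f)$, and show (i) that $V^\ast$ is attained by some described contract satisfying no-arbitrary-differential-treatment, and (ii) that no such contract does strictly better. The conceptual bridge I would establish first is that the orthogonality constraint in the definition of $\operatorname{co}^{\operatorname{ort}}$ is exactly the geometric counterpart of the no-arbitrary-differential-treatment (deterministic-assignment) constraint on $\sigma$: two compositions $F_i,F_j\in\Delta(S)$ satisfy $\langle F_i,F_j\rangle=0$ if and only if their supports are disjoint, and a family of group compositions has pairwise disjoint supports if and only if every state $s$ is assigned to a single contract label, i.e.\ if and only if each $\mu_s$ is a point mass. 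This equivalence is what lets the orthogonal hull play the role here that the full convex hull played for general described contracts.

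For direction (i), I would write $V^\ast=\sum_j \lambda_j V(F_j)$ with $\sum_j\lambda_j F_j=f$ and the $F_j$ pairwise orthogonal, which is available by the definition of $\overline{V}^{\operatorname{ort}}$ (and one may take finitely many terms by a Carath\'eodory argument). Take $K=\{j\}$, let $\hat g_j$ be the optimal fully coarse contract for composition $F_j$ with its unique consistent realized contract $g_j$, and define $\sigma$ by $\mu_s(j)=\lambda_j F_j(s)/f(s)$. Because the $F_j$ have disjoint supports, for each $s$ at most one term is nonzero, so $\mu_s$ is a point mass and the contract satisfies no-arbitrary-differential-treatment; the same one-line check as in Theorem~\ref{thm:value_func_characterization} confirms $\sum_j\mu_s(j)=1$ and that the composition of agents receiving $j$ is exactly $F_j$. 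Consistency of $(\hat g_j,g_j)$ is inherited from the fully coarse problem, and the identical contract-by-contract computation gives principal value $\sum_j\lambda_j V(F_j)=V^\ast$.

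For direction (ii), I would take an arbitrary described contract $((\hat g_k)_{k\in K},(g_k)_{k\in K},\sigma)$ satisfying no-arbitrary-differential-treatment and set $\rho_k(s)\coloneqq f(s)\mu_s(k)/\sum_{s'}f(s')\mu_{s'}(k)$, the composition of the group receiving contract $k$. Since each $\mu_s$ is a point mass, the supports of the $\rho_k$ are pairwise disjoint, hence the $\rho_k$ are pairwise orthogonal. Exactly as in the optimality half of Theorem~\ref{thm:value_func_characterization}, the fact that $\hat g_k$ is only one coarse contract for this group yields the per-group bound $V(\rho_k)$, so the principal's value is at most $\sum_{k}\big(\sum_s f(s)\mu_s(k)\big)V(\rho_k)$; since the group masses sum to one and $f=\sum_k\big(\sum_s f(s)\mu_s(k)\big)\rho_k$ is an orthogonal convex combination, this bound is itself a point of $\operatorname{co}^{\operatorname{ort}}(\operatorname{Graph}(V))$ at $f$ and is therefore $\le \overline{V}^{\operatorname{ort}}(f)=V^\ast$.

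The main obstacle I anticipate is not the two mirrored computations but pinning down the orthogonality/deterministic-assignment equivalence cleanly, together with the bookkeeping it forces. In particular I must restrict the inner products to the support of $f$ (states with $f(s)=0$ carry no mass and should be excluded, so that ``disjoint support'' and $\langle\cdot,\cdot\rangle=0$ genuinely coincide), discard labels $k$ of zero mass (consistent with the normalization defining $\rho_k$), and justify the finite Carath\'eodory representation for the orthogonal hull. Once that equivalence is secured, directions (i) and (ii) are structurally identical to the arguments already given, with ``orthogonal convex combination'' substituted for ``convex combination'' throughout.
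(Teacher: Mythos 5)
There is no paper proof to compare against: the proposition is stated at the end of \autoref{sec:coarse} with no argument attached, so your attempt stands as a candidate proof rather than an alternative to one. On its own terms it is correct, and it is the natural extension of the paper's technique: both directions track the proof of Theorem~\ref{thm:value_func_characterization} essentially line by line, and the genuinely new content is your bridge lemma that, for probability vectors, $\langle F_i,F_j\rangle=0$ iff the supports are disjoint, which (after restricting attention to $\supp f$ and to labels of positive mass) holds iff each state is assigned to a single label, i.e.\ iff the assignment is deterministic. Two simplifications would tighten your write-up. First, no Carath\'eodory argument is needed: pairwise disjoint supports in a finite state space already cap the number of positive-weight terms at $\lvert S\rvert$. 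Second, an orthogonal decomposition $f=\sum_j\lambda_jF_j$ is \emph{uniquely} determined by the induced partition $\{S_j\}$ of $\supp f$, since $f(s)=\lambda_jF_j(s)$ for the unique $j$ with $s\in\supp F_j$, forcing $\lambda_j=f(S_j)$ and $F_j$ to be the conditional of $f$ on $S_j$; hence the orthogonal closure at $f$ is a maximum over the finitely many partitions of $\supp f$, attainment is automatic, and direction (i) needs no caveat about whether the supremum is achieved.

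One point you should make explicit, because it is where you depart from the paper's literal text. The paper's definition declares no-arbitrary-differential-treatment to hold ``if $\sigma$ is injective,'' but injectivity of $\sigma\colon S\to\Delta(K)$ is neither implied by nor implies the property your proof actually uses, namely that $\mu_s$ is a point mass for every $s$ while distinct states may share a label. Your reading is the right one---it matches the verbal description in the footnote of \autoref{sec:model} (agents in the same state must receive the same communicated contract) and the remark following the definition (``each $s$ maps to a single group $k$ by $\sigma$'')---and it is the reading that makes the proposition true: determinism of the assignment is exactly equivalent to pairwise orthogonality of the group compositions $\rho_k$. Under a stricter literal reading in which distinct states must also receive distinct labels, the constrained optimum would collapse to the transparent value $V^T(f)$, which can lie strictly below the orthogonal concave closure, so the proposition would fail. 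Stating the deterministic-assignment interpretation as a preliminary to the proof is therefore not bookkeeping but a necessary step.
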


\end{document}